\newcommand{\A}{\mathcal{A}}
\newcommand{\B}{\mathcal{B}}
\newcommand{\C}{\mathcal{C}}
\newcommand{\U}{\mathcal{U}}
\newcommand{\Ub}[1]{\mathcal{U}_\mathcal{B}{#1}}
\newcommand{\Uc}[1]{\mathcal{U}_\mathcal{C}{#1}}
\newcommand{\UB}[1]{\mathcal{U}^2_\mathcal{B}{#1}}
\newcommand{\UC}[1]{\mathcal{U}^2_\mathcal{C}{#1}}
\renewcommand{\S}{\mathcal{S}}
\renewcommand{\O}{\mathcal{O}}
\spnewtheorem{observation}[theorem]{Observation}{\bfseries}{\itshape}
\spnewtheorem{result}{Result}{\bfseries}{\itshape}
\begin{document}

\title{Improved Bounds for Two Query Adaptive Bitprobe Schemes Storing Five Elements}

\titlerunning{Two Query Five Element Bitprobe Schemes}

\author{Mirza Galib Anwarul Husain Baig \and
Deepanjan Kesh}

%

\institute{Indian Institute of Technology Guwahati, Guwahati, Assam 781039, India \email{\{mirza.baig,deepkesh\}@iitg.ac.in}}

\maketitle

\begin{abstract}
	In this paper, we study two-bitprobe adaptive schemes storing five
	elements. For these class of schemes, the best known lower bound is
	$\Omega(m^{1/2})$ due to Alon and Feige~\cite{DBLP:conf/soda/AlonF09}.
	Recently, it was proved by Kesh~\cite{DBLP:conf/fsttcs/Kesh18} that
	two-bitprobe adaptive schemes storing three elements will take at least
	$\Omega(m^{2/3})$ space, which also puts a lower bound on schemes
	storing five elements. In this work, we have improved the lower bound to
	$\Omega(m^{3/4})$. We also present a scheme for the same that takes
	$\O(m^{5/6})$ space. This improves upon the $\O(m^{18/19})$-scheme due
	to Garg~\cite{garg15} and the $\O(m^{10/11})$-scheme due to Baig {\em et
	al.}~\cite{DBLP:conf/walcom/BaigKS19}.

	\keywords{Data structure \and Set membership problem \and Bitprobe model
	\and Adaptive Scheme.}
\end{abstract}

\section{Introduction}

	The {\em static membership problem} involves the study and construction of such data structures which can store an arbitrary subset $\S$ of size at most $n$ from the universe $\U = \{ 1, 2, 3, \dots, m \}$ such that membership queries of the form ``Is $x$ in $\S$?'' can be answered correctly and efficiently. A special category of the static membership problem is the {\em bitprobe model} in which we evaluate our solutions w.r.t. the following resources -- the size of the data structure, $s$, required to store the subset $\S$, and the number of bits, $t$, of the data structure read to answer membership queries. It is the second of these resources that lends the name to this model.
	
	In this model, the design of data structures and query algorithms are known as {\em schemes}. For a given universe $\U$ and a subset $\S$, the algorithm to set the bits of our data structure to store the subset is called the {\em storage scheme}, whereas the algorithm to answer membership queries is called the {\em query scheme}. Schemes are divided into two categories depending on the nature of our query scheme. Upon a membership query for an element, if the decision to probe a particular bit depends upon the answers received in the previous bitprobes of this query, then such schemes are known as {\em adaptive schemes}. If the locations of the bitprobes are fixed for a given element of $\U$, then such schemes are called {\em non-adaptive schemes}.
	
	For any particular setting of $n, m, s,$ and $t$, the corresponding scheme is referred to in the literature as a $(n, m, s, t)$-scheme~\cite{DBLP:conf/stoc/BuhrmanMRV00,DBLP:conf/esa/RadhakrishnanRR01,DBLP:conf/esa/RadhakrishnanSS10}. Radhakrishnan {\em et al.}~\cite{DBLP:conf/esa/RadhakrishnanSS10} also introduced the convenient notations $s_A(n, m, t)$ and $s_N(n, m, t)$ to denote the space required by an adaptive or a non-adaptive scheme, respectively.

\subsection{Two-bitprobe Adaptive Schemes}

	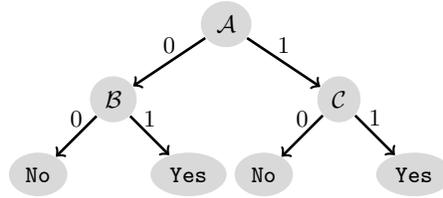
\begin{figure}[t]
		\centering
		\begin{tikzpicture}[scale=0.5]
			\node[ellipse, fill=gray!30] (A) at (5,4) {$\mathcal{A}$};
			\node[ellipse, fill=gray!30] (B) at (2,2) {$\mathcal{B}$};
			\node[ellipse, fill=gray!30] (C) at (8,2) {$\mathcal{C}$};
			\node[ellipse, fill=gray!30] (D) at (0,0) {{\tt No}};
			\node[ellipse, fill=gray!30] (E) at (4,0) {{\tt Yes}};
			\node[ellipse, fill=gray!30] (F) at (6,0) {{\tt No}};
			\node[ellipse, fill=gray!30] (G) at (10,0) {{\tt Yes}};

			\draw[line width=1pt, ->] (A) -- (B) node[midway, above] {$0$};
			\draw[line width=1pt, ->] (A) -- (C) node[midway, above] {$1$};

			\draw[line width=1pt, ->] (B) -- (D) node[midway, above] {$0$};
			\draw[line width=1pt, ->] (B) -- (E) node[midway, above] {$1$};
	
			\draw[line width=1pt, ->] (C) -- (F) node[midway, above] {$0$};
			\draw[line width=1pt, ->] (C) -- (G) node[midway, above] {$1$};
		\end{tikzpicture}
		\caption{The decision tree of an element.}
		\label{fig:tree}
	\end{figure}

	In this paper, we consider those schemes that use two bitprobes ($t = 2$) to answer membership queries. Data structures in such schemes consist of three tables, namely $\A, \B,$ and $\C$. The first bitprobe is always made in table $\A$; the location of the bit being probed, of course, depends on the element which is being queried. The second bitprobe is made in table $\B$ or in table $\C$, depending on whether $0$ was returned in the first bitprobe or $1$ was returned. The final answer of the query scheme is {\tt Yes} if $1$ is returned by the second bitprobe, otherwise it is {\tt No}. The data structure and the query scheme can be succintly denoted diagrammatically by what is known as the {\em decision tree} of the scheme (Figure~\ref{fig:tree}).

\subsection{Our Contribution}

	In this paper, we study schemes when the number of allowed
	bitprobes is two ($t = 2$) and the subset size is at most five ($n = 5$).
	Some progress has been made for subsets of smaller sizes.

	When the subset size is odd, the problem is well understood. More
	particularly, when $n = 1$, there exists a scheme that takes
	$\O(m^{1/2})$ amount of space, and it has been shown that it matches
	with the lower bound
	$\Omega(m^{1/2})$~\cite{DBLP:conf/soda/AlonF09,DBLP:conf/stoc/BuhrmanMRV00,DBLP:conf/esa/LewensteinMNR14}.
	When $n = 3$, Baig and Kesh~\cite{DBLP:conf/walcom/BaigK18} have shown
	that there exists a $\O(m^{2/3})$-scheme, and
	Kesh~\cite{DBLP:conf/fsttcs/Kesh18} has proven that it matches with the lower bound
	$\Omega(m^{2/3})$.

	For even sized subsets, tight bounds are yet to be proven. For $n = 2$,
	Radhakrishnan {\em et al.}~\cite{DBLP:conf/esa/RadhakrishnanRR01} have
	proposed a scheme that takes $\O(m^{2/3})$ space, and for $n = 4$, Baig
	{\em et al.}~\cite{DBLP:conf/iwoca/BaigKS19} have presented a
	$\O(m^{5/6})$-scheme, but it is as of yet unknown whether these bounds
	are tight.
	
	For subsets of size five ($n = 5$), the best known lower bound was due to Alon and Feige~\cite{DBLP:conf/soda/AlonF09} which is $\Omega(m^{1/2})$. The $\Omega(m^{2/3})$ lower bound for $n = 3$ also puts an improved bound for the $n = 5$ case. Our first result improves the bound to $\Omega(m^{3/4})$.
	
	\begin{result}
		[Theorem~\ref{thm:lower}] $s_A(5, m, 2) = \Omega(m^{3/4})$.
	\end{result}
	
	We also propose an improved scheme for the problem. The best known upper bound was due to Garg~\cite{garg15} which was $\O(m^{18/19})$, which was improved by Baig {\em et al.}~\cite{DBLP:conf/walcom/BaigKS19} to $\O(m^{10/11})$. In this paper, we improve the bound to $\O(m^{5/6})$.
	
	\begin{result}
		[Theorem~\ref{thm:upper}] $s_A(5, m, 2) = \O(m^{5/6})$.
	\end{result}
	
	One thing to note is that the space for the scheme storing five elements now matches the space for the scheme storing four elements. Moreover, the two results stated above combined together significantly reduces the gap between the upper and lower bounds for the problem under consideration.

\section{Lower Bound}

	In this section, we present our proof for the lower bound of $s_A(5, m, 2)$. So, the size of our subset $\S$ that we want to store in our data structure is at most five.
	
	In table $\A$ of our data structure, multiple elements must necessarily map to the same bit to keep the table size to $o(m)$. The set of elements that map to the same bit in this table is referred to in the literature as a {\em block} (Radhakrishnan {\em et al.}~\cite{DBLP:conf/esa/RadhakrishnanRR01}). We refer by $\A(e)$ to the block to which the element $e$ belongs. Elements mapping to the same bit in tables $\B$ and $\C$ will be referred to as just {\em sets}. That set of table $\B$ to which the element $e$ belongs will be denoted by $\B(e)$. $\C(e)$ is similarly defined.
	
	Storing a member $e$ of our subset $\S$ in table $\B$ is an informal way to state the following -- the bit corresponding to $\A(e)$ is set to $0$, and $\B(e)$ is set to $1$. So, upon query for the element $e$, we will get a $0$ in our first bitprobe, query table $\B$ at location $\B(e)$ to get $1$, and finally answer {\tt Yes}. Similarly, storing an elememt $f$ which is not in $\S$ in table $\C$ would entail assigning $1$ to $\A(f)$ and $0$ to $\C(f)$.

	To start with, we make the following simplifying assumptions about any scheme for the aforementioned problem.
	\begin{enumerate}
		\item All the tables of our datastructure have the same size, namely $s$, and hence the size our data structure is $3 \times s$.
		\item If two elements belong to the same block in table $\A$, they do not belong to the same sets in either of tables $\B$ or $\C$.
	\end{enumerate}
	In the conclusion of this section, we will show that these assumptions do not affect the space asymptotically, but rather by constant factors.

\subsection{Universe of an Element}

	We now define the notion of the {\em universe} of an element. This is similar to the definition of the universe of a set in Kesh~\cite{DBLP:conf/fsttcs/Kesh18}.
	
	\begin{definition}
		\label{def:u}
		The universe of an element $e$ w.r.t. to table $\B$, denoted by $\Ub(e)$, is defined as follows.
		\[
			\Ub(e) = \bigcup_{ f \in \B(e) \setminus \{ e \}} \A(f) \setminus \{ f \}.
		\]
		Similarly, the universe of an element $e$ w.r.t. to table $\C$, denoted by $\Uc(e)$, is defined as follows.
		\[
			\Uc(e) = \bigcup_{ f \in \C(e) \setminus \{ e \}} \A(f) \setminus \{ f \}.
		\]
	\end{definition}
	
	A simple property of the universe of an element, which will be useful later, is the following.
	
	\begin{observation}
		\label{obs:e}
		\begin{enumerate}
			\item $\A(e) \cap \Ub(e) = \phi$ and $\B(e) \cap \Ub(e) = \phi$.
			\item $\A(e) \cap \Uc(e) = \phi$ and $\C(e) \cap \Uc(e) = \phi$.
		\end{enumerate}
	\end{observation}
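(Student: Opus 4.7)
The plan is to derive both parts of Observation~\ref{obs:e} directly from Definition~\ref{def:u} together with the second simplifying assumption, namely that no two distinct elements sharing a block in $\A$ can also share a set in $\B$ or in $\C$. Because the statements for $\B$ and $\C$ are symmetric, I would prove only the first item in full and note that the second follows by replacing $\B$ with $\C$ (and $\Ub$ with $\Uc$) throughout.

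For $\A(e) \cap \Ub(e) = \phi$, I would suppose for contradiction that some $g$ lies in the intersection. Then $g \in \A(e)$ gives $\A(g) = \A(e)$, while $g \in \Ub(e)$ produces a witness $f \in \B(e) \setminus \{e\}$ with $g \in \A(f) \setminus \{f\}$, whence $\A(g) = \A(f)$. Combining these, $\A(e) = \A(f)$; but $f \in \B(e)$ already gives $\B(e) = \B(f)$, and by choice of witness $f \neq e$. Thus $e$ and $f$ are distinct elements sharing both an $\A$-block and a $\B$-set, contradicting assumption 2.

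For $\B(e) \cap \Ub(e) = \phi$, I would again fix a supposed $g$ in the intersection and the accompanying witness $f \in \B(e) \setminus \{e\}$ with $g \in \A(f) \setminus \{f\}$. Here the relevant distinctness is $g \neq f$, supplied by the set-difference $\setminus \{f\}$ in the definition. We have $\A(g) = \A(f)$ (from $g \in \A(f)$) and $\B(g) = \B(e) = \B(f)$ (since both $g$ and $f$ lie in $\B(e)$), so $g$ and $f$ are distinct elements sharing both an $\A$-block and a $\B$-set, again contradicting assumption 2.

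No substantive obstacle is expected; the only care needed is tracking which distinctness hypothesis ($e \neq f$ in the first case, $g \neq f$ in the second) is supplied by which set-difference in the definition of $\Ub$, so that assumption 2 can be invoked legitimately. Once these are identified, the observation is immediate, and the two statements for $\Uc$ are obtained by verbatim substitution of $\C$ for $\B$.
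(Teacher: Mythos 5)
Your proof is correct and follows essentially the same route as the paper: both parts are obtained directly from Definition~\ref{def:u} combined with Assumption~2. If anything, you are slightly more careful than the paper on the $\B(e) \cap \Ub(e) = \phi$ part --- the paper dismisses it by saying elements of $\B(e)$ are ``specifically preclude[d]'' by the definition, which literally only removes the witness $f$ itself from $\A(f)$, whereas your explicit use of Assumption~2 to rule out a second witness $f \neq g$ with $g \in \A(f)$ is what actually closes that case.
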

	
	\begin{proof}
		Due to Assumption~2, $e$ is the only element of the block $\A(e)$ in set $\B(e)$. So, it follows from the definition that no element of $\A(e)$ is part of $\Ub(e)$. No element of $\B(e)$ is part of $\Ub(e)$ as we specifically preclude those elements from $\Ub(e)$. The other scenarios can be similarly argued. \qed
	\end{proof}
	
	We make the following simple observations about the universe of an element to help illustrate the constraints any storage scheme must satisfy to correctly store a subset $\S$.
	
	\begin{observation}
		\label{obs:ub}
		If $\B(e) \cap \S = \{ e \}$, and we want to store $e$ in table $\B$, then all the elements of $\Ub(e)$ must be stored in table $\C$.
	\end{observation}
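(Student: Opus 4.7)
The plan is to track through the decision tree what the hypotheses force on the bits of the data structure, and show that every $f \in \Ub(e)$ has its block bit $\A(f)$ equal to $1$ --- which by the conventions laid out just before Definition~\ref{def:u} is exactly what it means for $f$ to be ``stored in table $\C$''. So the work reduces to pinning down the value of $\A(g)$ for each $g \in \B(e) \setminus \{e\}$ and then transferring this through the block structure.

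First I would unpack the hypothesis that $e$ is stored in $\B$: the bit at $\A(e)$ is $0$ and the bit at $\B(e)$ is $1$. Next, I would fix any $g \in \B(e) \setminus \{e\}$; the assumption $\B(e) \cap \S = \{e\}$ gives $g \notin \S$, yet $\B(g) = \B(e)$ is already forced to $1$ by the previous step. If $\A(g)$ were also $0$, the decision tree of Figure~\ref{fig:tree} would read $\B(g) = 1$ on the second probe and answer {\tt Yes} on $g$, contradicting $g \notin \S$; hence $\A(g)$ must be $1$ (and, incidentally, $\C(g)$ must be $0$). Finally, for any $f \in \Ub(e)$, Definition~\ref{def:u} places $f$ inside $\A(g) \setminus \{g\}$ for some such $g$, so $\A(f) = \A(g)$ carries the bit $1$, meaning every query on $f$ is routed to table $\C$, which is precisely what it means for $f$ to be stored there.

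The only step with real content is the middle one, and even that is a direct inspection of the decision tree, so I do not anticipate a genuine obstacle. Observation~\ref{obs:ub} is really a bookkeeping fact: once the bits pinned down by storing $e$ in $\B$ are propagated to the other members of $\B(e)$, all the blocks that intersect $\B(e) \setminus \{e\}$ are forced to have their $\A$-bit equal to $1$, which is exactly how the universe $\Ub(e)$ was defined to collect them.
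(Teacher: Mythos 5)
Your proof is correct and takes essentially the same route as the paper's: both argue that since the bit of $\B(e)$ is set to $1$ and each $g \in \B(e) \setminus \{e\}$ lies outside $\S$, the query for $g$ would wrongly answer {\tt Yes} unless $\A(g)$ is forced to $1$, and this bit then propagates to the whole block $\A(g)$ and hence to every element of $\Ub(e)$. The only cosmetic difference is that the paper explicitly invokes Assumption~2 to note that $g$ and $e$ lie in different blocks, a point your bit-level forcing argument leaves implicit.
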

	
	\begin{proof}
		As $e$ has been stored in table $\B$, the bit corresponding to the set $\B(e)$ must be set to $1$. Consider an element $f$, different from $e$, in $\B(e)$. According to Assumption~2, $e$ and $f$ cannot belong to the same block. As $f \notin \S$, so $f$ cannot be stored in table $\B$; if we do so, the query for $f$ will look at the bit $\B(e)$ and incorrectly return $1$. So, the element $f$ and its block $\A(f)$ must be stored in table $\C$.
		
		The above argument applies to any arbitrary element of $\B(e) \setminus \{ e \}$. So, according to Definition~\ref{def:u}, all of the elements of $\Ub(e)$ must be stored in table $\C$. \qed
	\end{proof}
	
	We make the same observation, without proof, in the context of table $\C$.

	\begin{observation}
		\label{obs:uc}
		If $\C(e) \cap \S = \{ e \}$, and we want to store $e$ in table $\C$, then all the elements of $\Uc(e)$ must be stored in table $\B$.
	\end{observation}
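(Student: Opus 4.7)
The plan is to mirror the proof of Observation~\ref{obs:ub} almost verbatim, swapping the roles of tables $\B$ and $\C$ (and the corresponding bit values for the first bitprobe). Storing $e \in \S$ in table $\C$ means that the first bitprobe at $\A(e)$ returns $1$ (sending the query to table $\C$) and the second bitprobe at $\C(e)$ returns $1$. So the bit corresponding to the set $\C(e)$ must be set to $1$ in the data structure.

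From here I would argue element by element over $\C(e) \setminus \{e\}$. Pick an arbitrary $f \in \C(e) \setminus \{e\}$. By Assumption~2, $f$ and $e$ lie in different blocks of table $\A$, so $\A(f) \neq \A(e)$. By hypothesis $\C(e) \cap \S = \{e\}$, hence $f \notin \S$, and therefore $f$ cannot be stored in table $\C$: doing so would require setting $\C(f) = \C(e) = 1$ and $\A(f) = 1$ to zero so that the query for $f$ would land in $\C$, but then the second bitprobe would return $1$ and the scheme would incorrectly answer \texttt{Yes} on $f$. Since $f$ must be handled somewhere, and cannot be stored in $\C$, the element $f$ together with every member of its block $\A(f)$ must be stored in table $\B$.

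Since this reasoning applies to every $f \in \C(e)\setminus\{e\}$, the union $\bigcup_{f \in \C(e)\setminus\{e\}} \A(f)\setminus\{f\}$, which is exactly $\Uc(e)$ by Definition~\ref{def:u}, must lie entirely in table $\B$. There is no genuine obstacle here; the only thing to double check is that the elements $f$ themselves, which are excluded from $\Uc(e)$, are also consistently handled — but they are simply the elements of the set $\C(e)\setminus\{e\}$ itself, which as argued above are stored in $\B$, so the statement as given (about the elements of $\Uc(e)$) goes through without further bookkeeping. \qed
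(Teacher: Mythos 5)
Your proposal is correct and is essentially the argument the paper intends: Observation~\ref{obs:uc} is stated without proof precisely because it mirrors Observation~\ref{obs:ub}, and you reproduce that proof with the roles of tables $\B$ and $\C$ (and the first-probe bit values) swapped, exactly as the paper does. One phrasing slip --- ``setting $\C(f) = \C(e) = 1$ and $\A(f) = 1$ to zero'' --- should read simply that storing $f$ in table $\C$ sets the bit of $\A(f)$ to $1$, whereupon the query for $f$ reads $\C(f) = \C(e) = 1$ and incorrectly answers \texttt{Yes}; the surrounding logic is sound as written.
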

	
	Next, we define, what could be referred to as, a higher-order universe of an element, built on top of the universe of the element.
	
	\begin{definition}
		\label{def:u2}
		The 2-universe of an element $e$ w.r.t. table $\B$, denoted by $\UB(e)$, is defined as follows.
		\[
			\UB(e) = \bigcup_{f \in \Ub(e)} \C(f) \setminus \{ f \}.
		\]
		Similarly, the 2-universe of an element $e$ w.r.t. table $\C$, denoted by $\UC(e)$, is defined as follows.
		\[
			\UC(e) = \bigcup_{f \in \Uc(e)} \B(f) \setminus \{ f \}.
		\]
	\end{definition}

	The following observations provide more constraints for our storage schemes. 
	
	\begin{observation}
		\label{obs:u2b1}
		Consider an element $e$ such that $\B(e) \cap \S = \{ e \}$, and suppose we want to store $e$ in table $\B$. If $f$ is a member of $\Ub(e)$ such that $\C(f) \cap \S = \{ f \}$, then all the other members of $\C(f)$ must be stored in table $\B$.
	\end{observation}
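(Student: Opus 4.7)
The plan is to apply Observation~\ref{obs:ub} to pin down the location of $f$, and then chase through the consequences of the bit at $\C(f)$ having been forced.

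First, since $\B(e) \cap \S = \{e\}$ and $e$ is stored in $\B$, Observation~\ref{obs:ub} tells us that every element of $\Ub(e)$ must be stored in table $\C$. In particular this applies to $f$. The hypothesis $\C(f) \cap \S = \{f\}$ implies $f \in \S$, so ``storing $f$ in $\C$'' means $\A(f) = 1$ and $\C(f) = 1$.

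Next, I take an arbitrary $g \in \C(f) \setminus \{f\}$. By the hypothesis $\C(f) \cap \S = \{f\}$, we have $g \notin \S$, and by Assumption~2, $g$ does not share its block with $f$. Because $g$ shares the $\C$-set with $f$, the bit at $\C(g) = \C(f)$ is already set to $1$. Now if $\A(g)$ were $1$, then the query algorithm on $g$ would probe $\A(g) = 1$, descend to table $\C$, read $\C(g) = 1$, and return {\tt Yes} — contradicting $g \notin \S$. Hence $\A(g) = 0$, and then consistency with $g \notin \S$ forces $\B(g) = 0$ as well. These two bit assignments are precisely what it means to ``store the non-member $g$ in table $\B$.''

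I do not anticipate a serious obstacle here; the argument is bookkeeping on the decision tree, exactly parallel to the proof of Observation~\ref{obs:ub}. The only care needed is to keep straight the two different meanings of ``storing an element in a table'' depending on whether the element is in $\S$ (then the second-level bit is $1$) or outside $\S$ (then the second-level bit is $0$).
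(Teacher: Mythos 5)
Your proof is correct and follows essentially the same route as the paper: invoke Observation~\ref{obs:ub} to force $f$ into table $\C$ (so the bit at $\C(f)$ is $1$), then conclude that each non-member $g \in \C(f) \setminus \{f\}$ cannot be stored in $\C$ and must go to table $\B$. Your only additions — unpacking the contradiction at the level of the decision tree and noting via Assumption~2 that $g$ and $f$ lie in different blocks — are harmless elaborations of steps the paper leaves implicit.
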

	
	\begin{proof}
		If $e$, a member of $\S$, is stored in table $\B$, then Observation~\ref{obs:ub} tells us that all members of $\Ub(e)$ must be stored in table $\C$. As $f \in \Ub(e) \cap \S$, so $f$ must be stored in table $\C$, and consequently, the bit corresponding to $\C(f)$ must be set to $1$. As the other members of $\C(f)$ do not belong to $\S$, they cannot be stored in table $\C$, and hence must be stored in table $\B$. \qed
	\end{proof}

	\begin{observation}
		\label{obs:u2b2}
		Consider an element $e$ such that $\B(e) \cap \S = \{ e \}$, and suppose we want to store $e$ in table $\B$. If $f$ is a member $\Ub(e)$ such that $\C(f) \cap \S = \{ x \}$, where $x \neq f$, then $x$ must be stored in table $\B$.
	\end{observation}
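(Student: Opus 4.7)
The plan is to trace through what constraints are forced on the bit values when $e$ is stored in $\B$, and derive the contradiction that arises if $x$ were placed in table $\C$.

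First, I would establish that $f$ cannot itself be a member of $\S$. Since $f$ always belongs to $\C(f)$, and the hypothesis gives $\C(f) \cap \S = \{x\}$ with $x \neq f$, it follows immediately that $f \notin \S$.

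Next, I would apply Observation~\ref{obs:ub}: because $\B(e) \cap \S = \{e\}$ and $e$ is being stored in $\B$, every element of $\Ub(e)$ must be stored in table $\C$. In particular, the non-member $f \in \Ub(e)$ is stored in $\C$, which by the convention fixed in the introduction means $\A(f) = 1$ and $\C(f) = 0$. Now $x \in \C(f)$, so the bit corresponding to $\C(x)$ is literally the same bit as that of $\C(f)$, which has just been pinned to $0$. Storing the member $x$ in table $\C$ would require this bit to equal $1$, a contradiction. Since $x \in \S$ must be stored somewhere in the data structure, the only remaining option is table $\B$.

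The main difficulty here is not depth of argument but careful bookkeeping: one must distinguish between the two meanings of ``stored in table $\C$'', namely $\A(y) = 1, \C(y) = 1$ for a member $y \in \S$ versus $\A(y) = 1, \C(y) = 0$ for a non-member $y \notin \S$. Once this distinction is kept straight, the claim drops out as a short corollary of Observation~\ref{obs:ub} combined with the pigeonhole-style observation that $x$ and $f$ are forced to share the same $\C$-bit.
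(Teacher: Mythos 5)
Your proof is correct and follows essentially the same route as the paper's: apply Observation~\ref{obs:ub} to force $f$ into table $\C$ with the bit of $\C(f)$ set to $0$, then observe that $x \in \C(f)$ shares that bit, so storing the member $x$ in $\C$ would force it to $1$, a contradiction. Your explicit preliminary step that $f \notin \S$ (since $f \in \C(f)$ and $\C(f) \cap \S = \{x\}$ with $x \neq f$) is a fact the paper uses without spelling out, so your write-up is if anything slightly more careful.
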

	
	\begin{proof}
		As $e$, a member of $\S$, is stored in $\B$, Observation~\ref{obs:ub} tells us that all the members of $\Ub(e)$, and $f$ in particular, must be stored in table $\C$. As $f \notin \S$, so the bit corresponding to $\C(f)$ has to be $0$. As $x \in \C(f)$ belongs to $\S$, then storing $x$ in table $\C$ would imply that $\C(f)$ must be set to $1$, which is absurd. So, $x$ must be stored in table $\B$. \qed
	\end{proof}
	
	We next state the same observations in the context of table $\C$.

	\begin{observation}
		\label{obs:u2c1}
		Consider an element $e$ such that $\C(e) \cap \S = \{ e \}$, and suppose we want to store $e$ in table $\C$. If $f$ is a member of $\Uc(e)$ such that $\B(f) \cap \S = \{ f \}$, then all the other members of $\B(f)$ must be stored in table $\C$.
	\end{observation}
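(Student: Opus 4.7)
The plan is to mirror the proof of Observation~\ref{obs:u2b1} with the roles of tables $\B$ and $\C$ interchanged, since the statement is exactly the symmetric counterpart. I would begin by invoking Observation~\ref{obs:uc}: since $e \in \S$ is being stored in table $\C$ and $\C(e) \cap \S = \{e\}$, every element of $\Uc(e)$ must be stored in table $\B$. In particular, the element $f \in \Uc(e)$ (which lies in $\S$ because $\B(f) \cap \S = \{f\}$) is forced into table $\B$, so the bit corresponding to the set $\B(f)$ must be set to $1$.

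Next I would argue about the remaining members of $\B(f)$. By hypothesis, $\B(f) \cap \S = \{f\}$, so any other element $g \in \B(f)$ satisfies $g \notin \S$. If such a $g$ were stored in table $\B$, then upon a query for $g$, the first bitprobe would read $\A(g)$ and the second would read $\B(g) = \B(f)$, which is already set to $1$, causing the query to return {\tt Yes}. This contradicts $g \notin \S$, so every such $g$ must be stored in table $\C$ instead. This yields the claim.

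There is no real obstacle here: the argument is a direct dualization of Observation~\ref{obs:u2b1}, using only Observation~\ref{obs:uc} in place of Observation~\ref{obs:ub} and the definition of $\Uc(e)$ in place of $\Ub(e)$. One minor thing to double-check is that the appeal to the hypothesis ``$f \in \Uc(e)$'' correctly uses Definition~\ref{def:u}, ensuring $f \neq e$ and that $f$ genuinely belongs to some $\A(f') \setminus \{f'\}$ for $f' \in \C(e) \setminus \{e\}$, but this is immediate and plays no role in the logical chain above.
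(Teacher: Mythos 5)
Your proof is correct and is precisely the argument the paper intends: the paper states Observation~\ref{obs:u2c1} without proof as the symmetric counterpart of Observation~\ref{obs:u2b1}, and your dualization --- invoking Observation~\ref{obs:uc} to force $f$ into table $\B$, setting $\B(f)$ to $1$, and then excluding the non-members $g \in \B(f)$ from table $\B$ --- mirrors the paper's proof of Observation~\ref{obs:u2b1} step for step. Nothing is missing.
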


	\begin{observation}
		\label{obs:u2c2}
		Consider an element $e$ such that $\C(e) \cap \S = \{ e \}$, and suppose we want to store $e$ in table $\C$. If $f$ is a member $\Uc(e)$ such that $\B(f) \cap \S = \{ x \}$, where $x \neq f$, then $x$ must be stored in table $\C$.
	\end{observation}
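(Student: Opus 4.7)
The plan is to mirror the proof of Observation~\ref{obs:u2b2} with the roles of tables $\B$ and $\C$ interchanged throughout. The statement to be established is the exact dual obtained by swapping $\B \leftrightarrow \C$, and every ingredient used in the proof of Observation~\ref{obs:u2b2} is available in its dual form: Observation~\ref{obs:ub} is paired with Observation~\ref{obs:uc}, and Definition~\ref{def:u2} treats $\UB$ and $\UC$ symmetrically. So the proof is essentially forced by symmetry, and the main content is just to verify that the chain of deductions still goes through.

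First I would invoke Observation~\ref{obs:uc}: since $e \in \S$ with $\C(e) \cap \S = \{e\}$ is stored in table $\C$, every member of $\Uc(e)$ must be stored in table $\B$. In particular, $f \in \Uc(e)$ must be stored in $\B$. From $\B(f) \cap \S = \{x\}$ with $x \neq f$ it follows that $f \notin \S$, and so by the conventions recalled in the preamble of the section, ``storing $f$ in $\B$'' means setting $\A(f) = 1$ and the bit of $\B(f)$ to $0$.

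Next I would derive a contradiction from the assumption that $x$ is stored in $\C$ rather than $\B$. Since $x \in \S$ and $x \in \B(f)$, if $x$ were stored in $\C$, then the query for $x$ would behave consistently only if $\A(x) = 1$ and $\C(x) = 1$; but in fact $x$ shares the set $\B(f)$ with $f$, and the alternative storage option ``$x$ in $\B$'' would force $\B(x) = \B(f) = 1$, contradicting the previous paragraph only under the assumption $x \in \B$. The real obstruction instead goes: we have already fixed $\B(f) = 0$; if $x$ were not stored in $\C$, the query scheme for $x \in \S$ would have to answer \texttt{Yes} via the $\B$ branch, which forces $\B(x) = \B(f) = 1$, a contradiction. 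Hence $x$ must be stored in table $\C$.

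The only step requiring care is the last one, namely recognising which side of the dichotomy ``$x$ stored in $\B$'' vs.\ ``$x$ stored in $\C$'' is ruled out -- this is exactly the reverse of what happens in Observation~\ref{obs:u2b2}, and it is easy to mis-state. Beyond that there is no obstacle: the observation is a routine symmetric counterpart, and the proof is a two-line deduction once Observation~\ref{obs:uc} has been applied.
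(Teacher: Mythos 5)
Your proposal follows exactly the paper's route: Observation~\ref{obs:u2c2} is stated there without a separate proof as the verbatim dual of Observation~\ref{obs:u2b2}, and your final chain --- $f \in \Uc(e)$ must be stored in table $\B$ by Observation~\ref{obs:uc}, hence the bit of $\B(f)$ is $0$, hence the member $x \in \B(f)$ cannot be stored via the $\B$ branch and must go to table $\C$ --- is precisely that dual argument. One slip to fix in the write-up: storing the non-member $f$ in table $\B$ means setting $\A(f)$ to $0$ (so that the second probe lands in table $\B$), not to $1$ as you state; since your contradiction uses only $\B(f) = 0$, the deduction itself is unaffected.
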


\subsection{Bad Elements}

	We now define the notion of {\em good} and {\em bad} elements. These notions are motivated by the notions of {\em large} and {\em bounded sets} from Kesh~\cite{DBLP:conf/fsttcs/Kesh18}.

	\begin{definition}
		\label{def:bad}
		$e$ is a bad element w.r.t. table $\B$ if one of the following holds.
		\begin{enumerate}
			\item Two elements of $\Ub(e)$ share a set in table $\C$.
			\item The size of $\UB(e)$ is greater than $2s$, i.e. $|\UB(e)| > 2 \cdot s$.
		\end{enumerate}
		Otherwise, it is said to be good. Bad and good elements w.r.t. to table $\C$ are similarly defined. 
	\end{definition}
	
	The next claims state the consequences of an element being bad due to any of the above properties getting satisfied.

	\begin{claim}
		If two elements of $\Ub(e)$ share a set in table $\C$, then
		$\exists$ a subset $\S$ that contains $e$ and has size two such that to store $\S$, 
		$e$ cannot be stored in $\B$.
	\end{claim}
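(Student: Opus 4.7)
The plan is to exhibit an explicit two-element subset $\S \ni e$ for which any attempt to store $e$ in table $\B$ forces a single bit of table $\C$ to take two different values simultaneously. Let $f_1$ and $f_2$ denote the two distinct elements of $\Ub(e)$ that share a set in table $\C$, so that $\C(f_1) = \C(f_2)$. I would take $\S = \{e, f_1\}$ and argue by contradiction, supposing that $e$ has been stored in $\B$.

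The first step is to check that the preconditions of Observation~\ref{obs:ub} are met for this $\S$. Since $f_1 \in \Ub(e)$, part~1 of Observation~\ref{obs:e} gives $f_1 \notin \A(e)$ and $f_1 \notin \B(e)$; in particular $f_1 \neq e$, so $|\S| = 2$, and $\B(e) \cap \S = \{e\}$. Applying Observation~\ref{obs:ub}, every element of $\Ub(e)$ must be stored in table $\C$; this includes both $f_1$ and $f_2$.

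Next I would split on membership in $\S$ to see what storing each of $f_1, f_2$ in $\C$ actually demands. Storing $f_1 \in \S$ in table $\C$ means setting the bit $\C(f_1)$ to $1$, so that the query for $f_1$ returns Yes; storing $f_2 \notin \S$ in table $\C$ means setting the bit $\C(f_2)$ to $0$, so that the query for $f_2$ correctly returns No. Because $\C(f_1)$ and $\C(f_2)$ refer to the same bit by hypothesis, this is the contradiction, showing that $e$ cannot be stored in $\B$ when the scheme is required to handle $\S = \{e, f_1\}$.

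The proof is essentially a direct invocation of Observation~\ref{obs:ub}, and I do not foresee any serious obstacle. The one piece of care needed is in choosing which of the two sharing elements to place in $\S$: exactly one of $f_1, f_2$ must be a member of $\S$ so that the shared $\C$-bit is forced both to $1$ and to $0$. By the symmetric roles of $f_1$ and $f_2$, either choice works, so picking $\S = \{e, f_1\}$ suffices.
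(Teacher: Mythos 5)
Your proof is correct and matches the paper's argument essentially verbatim: take $\S = \{e, f_1\}$, invoke Observation~\ref{obs:ub} to force both sharing elements into table $\C$, and derive the contradiction that the shared bit must be both $1$ (for $f_1 \in \S$) and $0$ (for $f_2 \notin \S$). Your extra care in verifying the preconditions via Observation~\ref{obs:e} (in particular that $f_1, f_2 \neq e$, so $\B(e) \cap \S = \{e\}$ and $f_2 \notin \S$) is a small rigor improvement the paper leaves implicit.
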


	\begin{proof}
		Suppose the elements $x, y \in \Ub(e)$ share the set $Y$ in
		table $\C$. We would prove that to store the subset $\S = \{ e,
		x \}$, $e$ cannot be stored in table $\B$.

		Let us say that $e$ indeed can be stored in table $\B$.
		According to Observation~\ref{obs:ub}, all elements of $\Ub(e)$,
		including $x$ and $y$, must be stored in table $\C$. As $x \in
		\S$, the bit corresponding to the set $Y$ must be set to $1$. As
		$y \notin \S$, the bit corresponding to $Y$ must be set to $0$.
		We thus arrive at a contradiction. So, $e$ cannot be stored in
		table $\B$. \qed
	\end{proof}

	\begin{claim}
	
		If the size of $\UB(e)$ is greater than $2s$, then $\exists$ a
		subset $\S$ that contains $e$ and has size at most three such that to
		store $\S$, $e$ cannot be stored in table $\B$.
		
	\end{claim}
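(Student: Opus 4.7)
My plan is to mimic the previous claim: I will exhibit a subset $\S \ni e$ of size at most three and two elements of $\UB(e)$ in a common $\B$-set whose bit values, as forced by the observations of Subsection~2.1, must disagree, contradicting the assumption that $e$ is stored in table~$\B$.

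First, I may assume that no two elements of $\Ub(e)$ share a $\C$-set: otherwise the previous claim already produces a witnessing subset of size two. Under this assumption, the sets $\C(f)$ with $f \in \Ub(e)$ are pairwise disjoint, so each $g \in \UB(e)$ has a unique \emph{parent} $f(g) \in \Ub(e)$ with $g \in \C(f(g))$. Moreover, Assumption~2 forces $|\C(f)| \le s$ (the elements of any $\C$-set must sit in distinct $\A$-blocks, and there are only $s$ such blocks), so the hypothesis $|\UB(e)| > 2s$ implies $|\Ub(e)| \ge 3$.

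Next, I apply pigeonhole to $|\UB(e)| > 2s$ distributed over the $s$ sets of table~$\B$: some $\B$-set $B$ must contain at least three elements of $\UB(e)$. Because $e$ lies in at most one $\C$-set (namely $\C(e)$), at most one of these three has a parent $f$ with $e \in \C(f)$, so I discard that index. Among the remaining two I wish to secure \emph{distinct} parents: if they already differ I proceed; otherwise $|\Ub(e)| \ge 3$ supplies a fresh $f' \in \Ub(e)$ whose $\C$-set is disjoint from the shared parent's, and the slack in $|\UB(e)| > 2s$ lets me exhibit an element of $\C(f') \setminus \{f'\}$ that also lies in $B$. Either way I end up with $u, v \in B \cap \UB(e)$ satisfying $f(u) \ne f(v)$ and $e \notin \C(f(u)) \cup \C(f(v))$.

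Finally I take $\S = \{\,e,\ u,\ f(v)\,\}$, of size three and containing $e$. Supposing for contradiction that $e$ is stored in $\B$, Observation~\ref{obs:ub} routes all of $\Ub(e)$ into table~$\C$. By construction $\C(f(u)) \cap \S = \{u\}$, so Observation~\ref{obs:u2b2} forces $u$ into $\B$ with bit~$1$; similarly $\C(f(v)) \cap \S = \{f(v)\}$, so Observation~\ref{obs:u2b1} forces every element of $\C(f(v)) \setminus \{f(v)\}$---in particular $v$---into $\B$ with bit~$0$. But $u$ and $v$ share the $\B$-set $B$, so their $\B$-bits must coincide, the sought contradiction. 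The main obstacle I foresee is the middle step: rigorously producing, against possible degeneracies such as all three pigeonholed elements sharing a parent or one of them lying in $\B(e)$, a pair in a common $\B$-set with distinct parents. The threshold $2s$ (rather than $s$) is calibrated exactly to provide the extra slack needed for this combinatorial selection once $|\C(f)| \le s$ is in hand.
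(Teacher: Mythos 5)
Your endgame --- taking $\S = \{e, u, f(v)\}$, using Observation~\ref{obs:u2b2} to force $u$ into table $\B$ with bit $1$ and Observation~\ref{obs:u2b1} to force $v$ into table $\B$ with bit $0$, contradicting the shared $\B$-set --- is exactly the paper's argument in its non-degenerate case ($z \neq z'$ in the paper's notation). But your middle step, securing a pair in a common $\B$-set with \emph{distinct} parents, has a genuine gap, and it is precisely the degenerate case your selection cannot reach. The assertion that ``the slack in $|\UB(e)| > 2s$ lets me exhibit an element of $\C(f') \setminus \{f'\}$ that also lies in $B$'' is unjustified and false in general: take $\Ub(e) = \{f_1, f_2, f_3\}$ with each $\C(f_i) \setminus \{f_i\}$ of size $s-1$ (legal under Assumption~2, since its elements lie in distinct blocks) placed entirely inside its own $\B$-set $B_i$, the $B_i$ pairwise distinct. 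Then $|\UB(e)| = 3(s-1) > 2s$ for $s > 3$, yet every $\B$-set containing two elements of $\UB(e)$ contains only children of a single parent. Worse, a same-parent pair $u, v \in \C(f)$ genuinely yields no contradiction along your lines: with $f \notin \S$ the bit of $\C(f)$ is $0$ and $v$ may legitimately be routed to table $\C$, while with $f \in \S$ both $u$ and $v$'s constraints can be met, so nothing pins down $v$'s $\B$-bit at all. (Your pruning step also misfires here: if all three pigeonholed elements share one parent $f$ with $e \in \C(f)$, all three get discarded, not one.)

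The paper spends the $2s$ threshold differently, and this is the missing idea: at most $s$ elements of $\UB(e)$ are the sole representatives of their $\A$-blocks, so discarding them leaves a set $Z$ of more than $s$ elements, \emph{each having a block-mate inside} $\UB(e)$, and the pigeonhole is applied to $Z$. When the pigeonholed pair $x, y$ share the parent $z = z'$, the paper pivots through the block structure of table $\A$: pick $y' \in \A(y) \cap \UB(e)$ with parent $z''$, and consider $\S = \{e, x, z''\}$. Observation~\ref{obs:u2b1} forces $y'$ into table $\B$, and since storing an element in table $\B$ sets its block's bit in $\A$ to $0$, the \emph{whole block} $\A(y')$ --- including $y$ --- is routed to table $\B$, recovering the $0$-bit on the shared set $X$ against the $1$-bit forced by $x$. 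This block-mate pivot, not extra pigeonhole slack, is what rescues the same-parent case; my counterexample configuration above is dispatched by exactly this move. (The paper also separately disposes of the degeneracies $x \in \A(e)$ and $x \in \B(e)$, which your sketch leaves untreated.) So your proposal reproduces the paper's proof up to the same-parent case but cannot be completed as written without importing the $Z$-construction.
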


	\begin{proof}
		Consider the set of those elements $f$ of $\UB(e)$ such that it is the only member of its block to belong to $\UB(e)$. As
		there are a total of $s$ blocks, there could be at most $s$ such
		elements. Removing those elements from $\UB(e)$ still leaves us
		with more than $s$ elements in $\UB(e)$. These remaining
		elements have the property that there is at least one other
		element from its block that is present in $\UB(e)$. Let this set
		be denoted by $Z$.

		As the size of $Z$ is larger than the size of table $\B$, there
		must exist at least two elements $x, y \in Z$ that share a set $X$
		in table $\B$. According to Definition~\ref{def:u2}, this
		implies that there exists elements $z, z' \in \Ub(e)$ such that
		$x \in \C(z) \setminus \{ z \}$ and $y \in \C(z') \setminus \{
		z' \}$. It might very well be that $z = z'$.
		
		If $x \in \A(e)$, as $e$ has been stored in table $\B$, so all the elements of $\A(e)$, including $x$, must have been stored in table $\B$. Consider the subset $\S = \{ e, x, z' \}$. As $x \in \S$, the bit corresponding to set $X$ must be set to $1$. As $e$ is stored in table $\B$, Observation~\ref{obs:ub} tells us that $z' \in \Ub(e)$ must be stored in table $\C$. As $\C(z') \cap \S = \{ z' \}$, Observation~\ref{obs:u2b1} tells us that $y$ must be stored in table $\B$. So, the bit corresponding to set $X$ must be set to $0$, which is absurd. So, to store $\S$, $e$ cannot be stored in table $\B$. This argument holds even if $x = e$. Similar is the case if $y \in \A(e)$. 
		
		If $x \in \B(e) \setminus \{ e \}$, and as $e$ has been stored in table $\B$, Observation~\ref{obs:ub} tells us that $x$ must be stored in table $\C$. Consider the subset $\S = \{ e, z \}$. Observation~\ref{obs:u2b1} tells us that as $z \in \Ub(e)$ is in $\S$, $x \in \C(z)$ cannot be stored in table $\C$, which is absurd. So, to store $\S$, $e$ cannot be stored in table $\B$. We can similarly argue the case $y \in \B(e)$.
		
		We now consider the case when $x, y \notin \A(e)$ and $\notin \B(e)$. If $\S$ contains $e$ and $x$, and we store $e$ in table $\B$, Observation~\ref{obs:ub} tells us that $z \in \Ub(e)$ must be stored in table $\C$, and as $x \in \C(z)$, Observation~\ref{obs:u2b2} tells us that $x$ must be stored in table $\B$. As $x \in \S$, hence the bit corresponding to set of $x$ in table $\B$, which is $X$, must be set to $1$.
		
		If $z \neq z'$, we include $z'$ in $\S$, and according to Observation~\ref{obs:u2b1}, $y \in \C(z')$ must be stored in table $\B$. As $y \in X$ is not in $\S$, $X$ must be set to $0$, and we arrive at a contradiction for the subset $\S = \{ e, x, z' \}$.
		
		It could also be the case that $z = z'$. As $y \in Z$, there exists an element $y' \in \A(y) \cap \UB(e)$. Let $y' \in C(z'')$, where $z'' \in \Ub(e)$. In this scenario, we consider storing the subset $\S = \{ e, x, z'' \}$. As, $z'' \in \S \cap \Ub(e)$, and $y' \notin \S$, Observation~\ref{obs:u2b1} implies that $y'$, and hence the whole of block $\A(y')$, including $y$, must be stored in table $\B$. As $y \notin \S$, the set of $y$ in table $\B$, which is $X$, must be set to $0$, and we again arrive at a contradiction.

		So, we conclude that $e$ in either of the cases cannot be stored
		in table $\B$. \qed
	\end{proof}

	The two claims above imply the following -- if an element $e$ is bad
	w.r.t. table $\B$ (Definition~\ref{def:bad}) due to Property 1, or if
	this property does not hold but Property 2 does, then there exists a
	subset, say $\S_1$, of size at most three containing $e$ such that to
	store $\S_1$, $e$ cannot be stored in table $\B$. The claims above also
	hold w.r.t. table $\C$. So, we can claim that if $e$ is bad w.r.t. to
	table $\C$, then there exists a subset $\S_2$ containing $e$ of size at
	most three such that to store $\S_2$, $e$ cannot be stored in table
	$\C$.

	Consider the set $\S = \S_1 \cup \S_2$. As $e$ is common in both the
	subsets, size of $\S$ is at most five. If $e$ is bad w.r.t. to table
	$\B$ and table $\C$, then to store subset $\S$, we cannot store $e$ in
	either of the tables, which is absurd. We summarise the discussion in the following
	lemma.

	\begin{lemma}
		\label{lem:bad}
		If an element $e$ is bad w.r.t. $\B$, then it must be good w.r.t
		$\C$.
	\end{lemma}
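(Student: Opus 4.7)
The plan is to prove the lemma by contradiction, leveraging the two claims established just above and their symmetric counterparts for table $\C$. Specifically, I would assume for contradiction that there exists an element $e$ that is bad with respect to both tables $\B$ and $\C$ simultaneously, and then exhibit a concrete subset $\S$ of size at most five containing $e$ that cannot be stored by any valid scheme, thereby contradicting the correctness of the scheme.

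First, I would invoke the two claims preceding the lemma: since $e$ is bad w.r.t.\ $\B$, either Property~1 of Definition~\ref{def:bad} holds (in which case the first claim applies) or Property~2 holds (in which case the second claim applies). In either case, there exists a witness subset $\S_1$ with $e \in \S_1$ and $|\S_1| \le 3$ such that, in order to correctly store $\S_1$, the element $e$ cannot be placed in table $\B$. Next I would observe that both claims are symmetric in the roles of $\B$ and $\C$: swapping $\B \leftrightarrow \C$ and using Observations~\ref{obs:uc}, \ref{obs:u2c1}, and \ref{obs:u2c2} in place of Observations~\ref{obs:ub}, \ref{obs:u2b1}, and \ref{obs:u2b2} yields the analogous statement for table $\C$. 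Hence, since $e$ is also bad w.r.t.\ $\C$, there exists a subset $\S_2$ with $e \in \S_2$ and $|\S_2| \le 3$ such that, to store $\S_2$, $e$ cannot be stored in table $\C$.

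Now I would form $\S = \S_1 \cup \S_2$. Since $e$ belongs to both $\S_1$ and $\S_2$, we have $|\S| \le |\S_1| + |\S_2| - 1 \le 5$, so $\S$ is an admissible subset for the problem. The key observation is that if $\S_1 \subseteq \S$, then $e$ still cannot be stored in $\B$, because the constraints forcing the contradiction in the first claim only involve elements of $\S_1$ and they remain present in $\S$; likewise $\S_2 \subseteq \S$ forces $e$ not to be in $\C$. But every member of $\S$ must be stored in either $\B$ or $\C$, giving the desired contradiction, and hence $e$ must be good w.r.t.\ at least one of the two tables.

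I do not anticipate a genuine obstacle here, since the work has already been done by the two claims; the only subtlety is making sure that the forcing argument in each claim uses only elements inside $\S_1$ (resp.\ $\S_2$) and therefore remains valid when additional elements are added to form $\S$. Inspecting the proofs of the two claims confirms this: the contradictions derived there arise purely from conditions on the bit of some set being simultaneously forced to $0$ and $1$ by members already inside $\S_1$ (resp.\ $\S_2$), and adding further elements of $\S_2$ (resp.\ $\S_1$) cannot relax these forced assignments. With this verification, the lemma follows immediately.
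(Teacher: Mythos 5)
Your proposal is correct and follows essentially the same route as the paper: the paper's proof is precisely the discussion preceding the lemma, which takes the witness subsets $\S_1$ and $\S_2$ of size at most three supplied by the two claims (and their $\B \leftrightarrow \C$ symmetric versions), forms $\S = \S_1 \cup \S_2$ of size at most five since $e$ is common to both, and concludes that $e$ could be stored in neither table, which is absurd. Your final paragraph even makes explicit a monotonicity check (that the forcing in each claim survives enlarging the witness subset) which the paper silently assumes, so if anything your write-up is slightly more careful than the original.
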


\subsection{Good Schemes}

	Based on the above lemma, we can partition our universe $\U$ into two
	sets $\U_1$ and $\U_2$ -- one that contains all the good elements w.r.t.
	to table $\B$, and one that contains the bad elements. We now partition
	each block and each set of the three tables of our datastructure into
	two parts, one containing elements from $\U_1$, and one containing the
	elements from $\U_2$. For elements of $\U_1$, only those blocks and sets
	that contain elements of $\U_1$ will be affected; similarly for the
	elements of $\U_2$.

	In effect, we have two independent schemes, one for $\U_1$ and one for
	$\U_2$. In the scheme for $\U_1$, all the elements in table $\B$ are
	good. In the scheme for $\U_2$, all the elements in table $\B$ are bad,
	and consequently, Lemma~\ref{lem:bad} tells us that all the elements of
	table $\C$ are good. In the scheme for $\U_2$, we now relabel the table
	$\B$ to $\C$ and relabel the table $\C$ to $\B$. To make the new scheme
	for $\U_2$ work, we now have to store $0$ in the blocks of table $\A$ for $\U_2$
	when earlier we were storing $1$, and have to store $1$ when earlier we
	were storing $0$.

	This change gives us a new scheme with two important properties -- the
	size of the datastructure has doubled from the earlier scheme, and all
	the elements in table $\B$ are now good.

	\begin{lemma}
		\label{lem:good}
		
		Given a $(5, m, s, 2)$-scheme, we can come up with a $(5, m, 2
		\times s, 2)$-scheme such that all the elements of $\U$ are good
		w.r.t. to table $\B$ in the new scheme.
	\end{lemma}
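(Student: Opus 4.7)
The plan is to make rigorous the informal partition-and-relabel argument sketched just before the lemma. I would break the construction into three steps: (i) partition $\U$ by $\B$-goodness, (ii) refine each table into a $\U_1$-part and a $\U_2$-part to obtain two independent sub-schemes, and (iii) relabel tables in the $\U_2$ sub-scheme and concatenate the two sub-schemes.

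First, using Lemma~\ref{lem:bad}, I split $\U$ into $\U_1$, the set of elements good w.r.t. $\B$, and $\U_2 = \U \setminus \U_1$; the lemma guarantees every element of $\U_2$ is good w.r.t. $\C$. Next, I refine the data structure by replacing each block $\A(e)$ by the two sub-blocks $\A(e) \cap \U_1$ and $\A(e) \cap \U_2$, and similarly splitting every set in tables $\B$ and $\C$. This yields two independent sub-schemes $\Pi_1$ on $\U_1$ and $\Pi_2$ on $\U_2$. Since each of the original $s$ indices in any of $\A$, $\B$, $\C$ is split into at most two (one per $\U_i$), the total number of indices across $\Pi_1$ and $\Pi_2$ in any single table is at most $2s$.

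In $\Pi_2$ I then swap the roles of tables $\B$ and $\C$; to keep the first-probe routing correct, I also complement every bit of table $\A$ in $\Pi_2$, so that an index which used to route to $\B$ on a $0$ now routes to the renamed $\B$ (the old $\C$) on a $1$. After this swap, every element of $\U_2$ is good w.r.t. the new table $\B$ of $\Pi_2$. Concatenating $\Pi_1$ and $\Pi_2$ table-by-table then gives the desired $(5, m, 2s, 2)$-scheme: queries for $e \in \U_i$ are answered entirely within the $\Pi_i$ portion of the combined structure, and by construction every element of $\U$ is good w.r.t. the combined table $\B$.

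The step I expect to require the most care is verifying that $\B$-goodness of $\U_1$-elements (and $\C$-goodness of $\U_2$-elements) is preserved under the refinement in step (ii). Both clauses of Definition~\ref{def:bad} need checking: that the property ``two elements of $\Ub(e)$ share a set in $\C$'' is monotone under restricting sets and blocks to $\U_i$, and that $|\UB(e)|$ does not grow under refinement. Both follow because $\Ub(e)$, $\Uc(e)$, $\UB(e)$, $\UC(e)$ are defined as unions indexed by members of $\B(e)$ or $\C(e)$, all of which can only shrink when we pass to the sub-universe, so the refined universes are contained in the original ones. The bit-flip correctness in step (iii) is then a one-line check against the decision tree in Figure~\ref{fig:tree}.
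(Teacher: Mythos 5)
Your proposal is correct and follows essentially the same route as the paper: partition $\U$ by $\B$-goodness via Lemma~\ref{lem:bad}, split every block and set of the three tables to obtain two independent sub-schemes, and in the sub-scheme for the bad elements swap the labels of tables $\B$ and $\C$ while complementing the corresponding bits of table $\A$, doubling the size to $2s$. Your explicit verification that goodness is preserved under the refinement --- because $\Ub(e)$ and $\UB(e)$ can only shrink when sets and blocks are restricted to a sub-universe, so neither clause of Definition~\ref{def:bad} can newly hold --- is a detail the paper leaves implicit, but it is the same argument, not a different one.
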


\subsection{Space Complexity}

	Consider a $(5, m, 3 \times s, 2)$-scheme all of whose elements are good w.r.t. table $\B$. The table sizes then are each equal to $s$. According to Lemma~\ref{lem:good}, the $2$-universe of each element w.r.t. to $\B$ will be at most $2s$. So, the sum total of all the $2$-universe sizes of all the elements is upper bounded by $m \times 2s$.
	
	We now consider how much each set of table $\C$ contribute to the total. From Definition~\ref{def:u2}, we have the following --
	\[
		\sum_{e \in \U} \mid \UB(e) \mid = \sum_{e \in \U} \mid \bigcup_{f \in \Ub(e)} \C(f) \setminus \{ f \} \mid = \sum_{e \in \U} \left( \sum_{f \in \Ub(e)} \mid \C(f) \setminus \{ f \} \mid \right).
	\]
	As all the elements are good, and hence for every element $e$, no two elements of $\Ub(e)$ share a set in table $\C$, we can thus convert the union in Definition~\ref{def:u2} to summation.

	Resolving the above equation, the details of which can be found in the~\ref{appn:lower}, we have
	\[
		\sum_{e \in \U} \mid \UB(e) \mid \ \geq \ c \cdot \frac{m^4}{s^3},
	\]
	for some constant $c$. The proof show that the minimum value is achieved when all the blocks and the sets in the three tables are of the same size, i.e. $m / s$. This combined with the upperbound for total sum of the sizes of all 2-universes gives us
	\begin{align*}
		c \cdot \frac{m^4}{s^3} \leq m \times 2s.
	\end{align*}
	Resolving the equation gives us
	\[
		s = \Omega(m^{3/4}).
	\]
	This bound applies to good schemes that respect the two assumptions declared at the beginning of this section.
	
	Suppose we have an arbitrary adaptive $(5, m, s, 2)$-scheme. If we want to make all the tables in this scheme of the same size, we can add extra bits which will make the size of the data structure at most $3 \cdot s$. So, we get a $(5, m, 3 \times s, 2)$-scheme that respect Assumption~1.
	
	In Kesh~\cite{DBLP:conf/fsttcs/Kesh18}, sets which have multiple elements from the same block were referred to as {\em dirty sets}. {\em Clean sets} were those which contain elements from distinct blocks. It was shown in Section~3 that any scheme with dirty sets can be converted into a scheme with only clean sets by using twice amount of space. Though the final claim was made in context of $n = 3$, but the proof applies to any $n$. So, we can now have a $(5, m, 6 \times s, 2)$-scheme that respects both of our assumptions.
	
	Such a scheme can be converted into a scheme with only good elements in table $\B$ by using twice the amount of space as before. We now have a $(5, m, 12 \times s, 2)$-scheme, where the table sizes are all $4s$, and we have shown that $4s = \Omega(m^{3/4})$.
	
	We summarise our discussion in the following theorem on the lower bound for two-adaptive bitprobes schemes storing five elements.
	\begin{theorem}
		\label{thm:lower}
		$s_A(5, m, 2) = \Omega(m^{3/4})$.
	\end{theorem}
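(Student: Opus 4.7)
The plan is to combine Lemma~\ref{lem:good} with a double-counting of the 2-universes w.r.t.\ table $\B$. By that lemma, at the cost of a constant factor in space, I may assume every element of $\U$ is good w.r.t.\ $\B$, so Definition~\ref{def:bad} directly yields the clean upper bound
\[
\sum_{e \in \U} \bigl|\UB(e)\bigr| \ \le\ 2sm.
\]
My goal will be to lower-bound the same sum by $\Omega(m^4/s^3)$; equating the two inequalities immediately gives $s = \Omega(m^{3/4})$, which is the content of Theorem~\ref{thm:lower}.

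For the lower bound, I would expand $|\UB(e)|$ via Definition~\ref{def:u2}. Because every $e$ is good, no two elements of $\Ub(e)$ share a set in $\C$, and by Assumption~2 the blocks appearing in Definition~\ref{def:u} are distinct; both unions therefore become disjoint, so that
\[
\bigl|\UB(e)\bigr| \ =\ \sum_{f \in \Ub(e)} \bigl(|\C(f)|-1\bigr), \qquad |\Ub(e)| \ =\ \sum_{g \in \B(e)\setminus\{e\}} \bigl(|\A(g)|-1\bigr).
\]
Swapping the order of summation, the total $\sum_{e} |\UB(e)|$ becomes a product-like sum indexed by blocks of $\A$ and sets of $\B$ and $\C$, whose total index weight is pinned to $m$ in each table.

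The main obstacle is the combinatorial minimisation: I expect to show, by repeated application of Jensen's inequality (or a convexity/Cauchy--Schwarz argument) on each of the three sums separately, that the expression is minimised when all blocks of $\A$ and all sets of $\B, \C$ have the uniform size $m/s$. Plugging in this balanced configuration produces a quantity of order $s \cdot (m/s)^4 = m^4/s^3$, which is the required lower bound (with some constant $c$). This is the step that I would defer to the appendix, since the calculation is routine but somewhat tedious.

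Finally, I would peel off the three simplifying assumptions in turn: padding short tables up to the maximum table size gives equal table sizes at a factor of at most three; the dirty-to-clean conversion of Kesh~\cite{DBLP:conf/fsttcs/Kesh18} enforces Assumption~2 at a factor of two; and Lemma~\ref{lem:good} enforces goodness w.r.t.\ $\B$ at another factor of two. Composing these constant-factor blow-ups on an arbitrary $(5,m,s,2)$-scheme yields a scheme of size $12s$ with table sizes $4s$ satisfying $4s = \Omega(m^{3/4})$, hence $s_A(5,m,2) = \Omega(m^{3/4})$.
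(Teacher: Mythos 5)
Your proposal is correct and follows essentially the same route as the paper: the upper bound $\sum_{e \in \U} |\UB(e)| \leq 2sm$ from Lemma~\ref{lem:good}, the matching lower bound $c \cdot m^4/s^3$ obtained by converting both unions to disjoint sums (goodness plus Assumption~2) and a balancing/convexity argument deferred to the appendix (the paper realises this via a repeated sum-of-products inequality, which is the same Cauchy--Schwarz-type step you invoke), and the identical constant-factor reductions ($3s \to 6s \to 12s$, tables of size $4s$) to remove the simplifying assumptions. No gaps.
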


\section{Our Scheme}

In this section, we will present a scheme which stores an arbitrary subset of size at most five from a universe of size $m$, and answers the membership queries in two adaptive bitprobes. This scheme improves upon the $\O(m^{10/11})$-scheme by the authors~\cite{DBLP:conf/walcom/BaigKS19}, and is fundamentally different from that scheme in the way that here block sizes are nonuniform, and any two blocks in table $\C$ share at most one bit. As per the convention of that scheme, we will use the label $T$ to refer to the table $\A$, $T_0$ to refer to the table $\B$, and $T_1$ to refer to the table $\C$.

\noindent {\bf Superblock : }
In this scheme, we use the idea in Kesh~\cite{DBLP:conf/cocoa/Kesh17} of mapping the elements of the universe on a square grid. Furthermore, we have used the idea of Radhakrishnan {\em et al.}~\cite{DBLP:conf/esa/RadhakrishnanRR01} to divide the universe into blocks and superblocks. Our scheme divides the universe of size $m$ into superblock of size $x^{2}zt$. Each superblock is made up of rectangular grids of size $t \times z$, and there are $x^{2}$ of them as shown in  Figure \ref{fig:structure}. Further, each integral point on a grid represents a unique element.

 \begin{figure}[t]
	\centering
	\captionsetup{justification=raggedright}  

	\begin{minipage}{.4\textwidth}
		\centering
		
		\includegraphics[width=\textwidth]{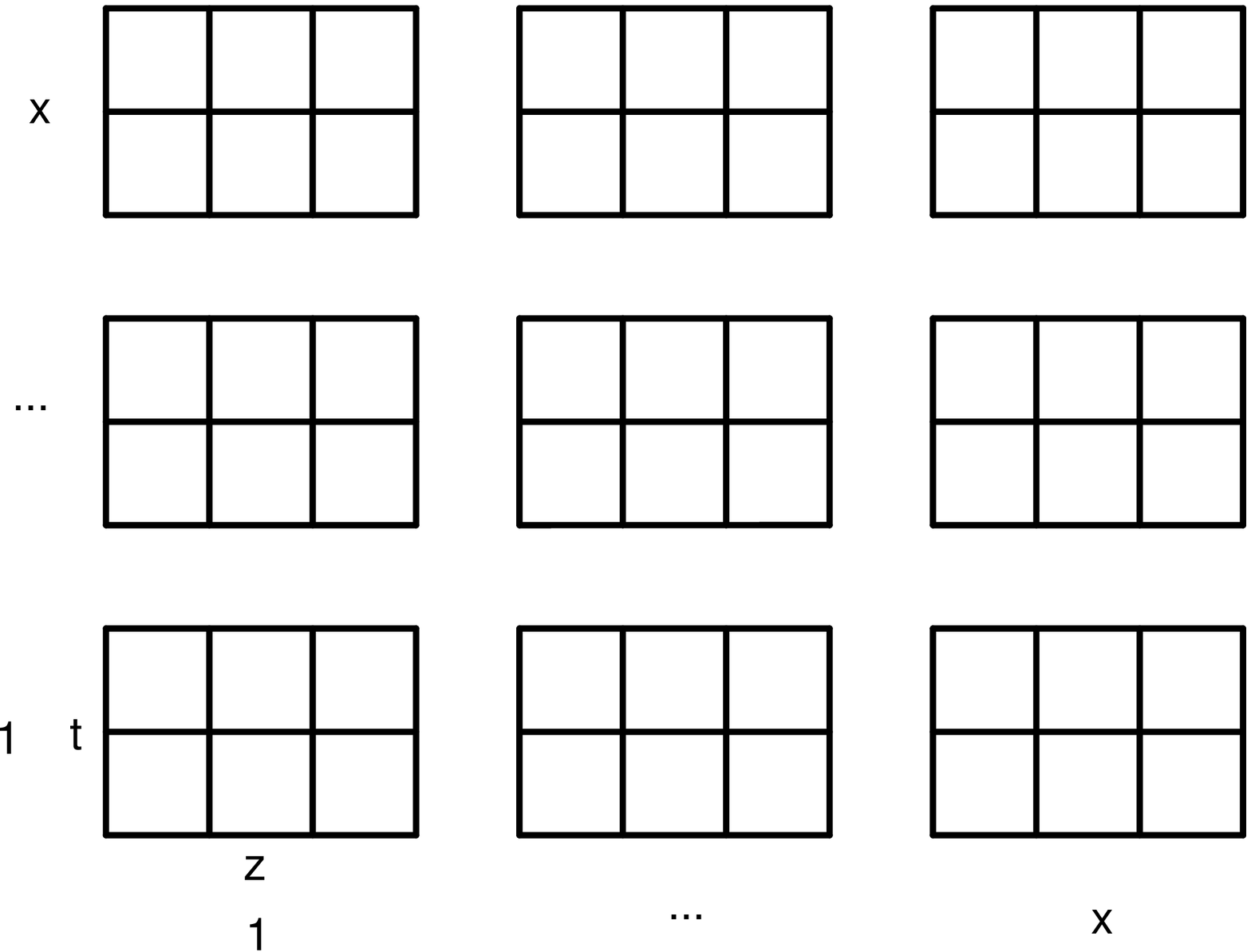}
		\caption{Figure showing structure of a superblock}
		\label{fig:structure}
	\end{minipage}%
\hspace{0.5cm}
	\begin{minipage}{.4\textwidth}
		\centering
		\vspace{-.3cm}
		\includegraphics[width=\textwidth]{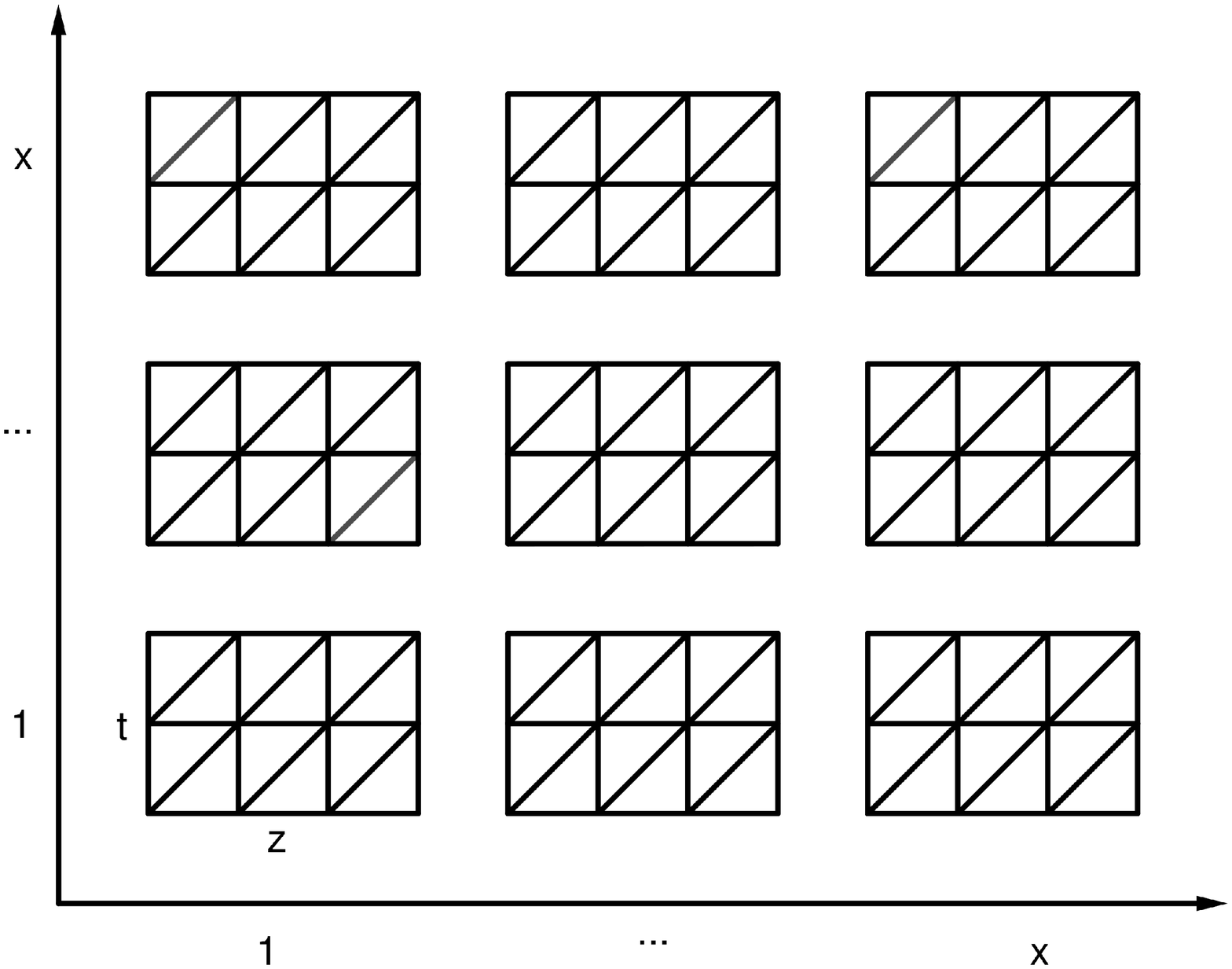}
		\caption{Lines drawn in the first superblock}
		\label{fig:line} 
	\end{minipage}	
\end{figure}

\noindent {\bf Block : }
For the $1$st superblock we draw lines with slope $1$ as shown in Figure \ref{fig:line}. Each line drawn represents a block. From Figure \ref{fig:line}, we can see that some blocks are of equal size and some are of different size. We do this for all the superblocks, and hence partitioning the universe into blocks. For the $i$th superblock we draw lines with slope $1/i$.

\noindent {\bf Table $T_1$ : } This table has space equal to that of a single superblock, i.e., $x^{2}zt$. All the superblocks can be thought of as superimposed over each other in this table. Structure of this table is shown in Figure \ref{fig:structure}.

\noindent {\bf Table $T$ : } In this table, we store a single bit for each block. Let there be $n$ superblocks in total. Now let us concentrate on a single grid of Figure \ref{fig:line}. The number of lines drawn for the $i$th superblock is equal to $z + c\cdot it$, where $c$ is a constant. If we sum this for all the superblocks total number of lines drawn for the single grid will be equal to $nz + c \cdot n^{2}t$. Now, since there are  $x \times x$ grids, the total number of lines drawn for all the superblocks will be $(nz + c \cdot n^{2}t)x^{2}$. As mentioned earlier, each line represents a block, and for each block, we have one bit of space in table $T$. So the size of this table $T$ is $(nz + c \cdot n^{2}t)x^{2}$ bits.

\noindent {\bf Table $T_0$ : }
In addition to lines drawn in superblocks to divide them into blocks, we also draw dotted lines in all the superblocks, as shown in Figure \ref{fig:dotted}. For the $i$th superblock we draw dotted lines with slope $1/i$. Further, we store a block of size $t$ in table $T_0$ for each dotted lines drawn. Now, we can see that for a specific superblock there could be many blocks belonging to that superblock which lies on the same dotted line. All the blocks which lie completely on the same dotted line query the same block in table $T_0$ kept for the dotted line.
\begin{wrapfigure}{r}{0pt}
	\centering
	\centering
	\vspace{-10pt}
	\includegraphics[width=0.5\textwidth]{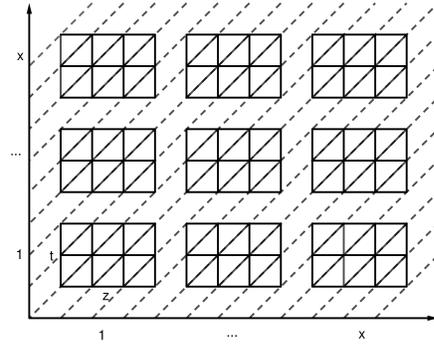}
	\caption{Dotted lines drawn for the first superblock}
	\label{fig:dotted}
	\vspace{-20pt}	

\end{wrapfigure}
Now let us talk about the space taken by table $T_0$. Using the idea shown in Figure \ref{fig:dotted} to draw the dotted lines, if we sum the total number of dotted lines drawn for all the superblocks which pass through x-axis, we will get $nzx$. Further, if we sum the total number of dotted lines drawn for all the superblocks from the y-axis, we get it to be less than or equal to $c_{1} \cdot n^{2}t \times x $, where $c_{1}$ is a positive constant. If we sum the total number of dotted lines drawn for all the superblocks from x and y-axis, we get $nzx + c_{1} \cdot n^{2}t \times x$. Since we store a block of size $t$ for each dotted line drawn, total space for table $T_0$ is $(nzx + c_{1} \cdot n^{2}t \times x) \times t$.

\noindent {\bf Size of data structure : }
Summing up the space taken by all the tables we get the following equation:
\begin{equation}
s(x,z,t)= x^{2}zt + (nz + c \cdot n^{2}t)x^{2} + (nzx + c_{1} \cdot n^{2}t \times x) \times t 
\end{equation} 
As mentioned earlier size of each superblock is $x^{2}zt$, so the total number of superblocks are $n= m/(x^{2}zt)$. Substituting this in the above equation, we get the following:
\begin{equation}
s(x,z,t)=  c_{1} \cdot \frac{m^{2}}{x^{3}z^{2}} + c \cdot \frac{m^{2}}{x^{2}z^{2}t} + \frac{m}{x} + \frac{m}{t} + x^{2}zt
\end{equation}
Choosing  $x = t = m^{1/6}$ and $z = m^{2/6}$, we get the space taken by our data structure to be $\O(m^{5/6})$.

\subsection{Query Scheme}
Our query scheme has three tables $T, T_0$ and $T_1$. Given a query element, we first find out the blocks to which it belongs. Further, we query the bit stored for this block in table $T$. If the bit returned is zero, we make the next query to table $T_0$ otherwise to table $T_1$. We say that query element is part of the set given to be stored if and only if last bit returned is one.

\subsection{Storage Scheme} \label{appn:upper} Our storage scheme sets the bits of tables $T, T_0$ and $T_1$ to store an arbitrary subset of size at most five in such a way that membership queries can be answered correctly. Storage scheme sets the bit of data structure depending upon the distribution of elements in various superblocks. Distribution of elements into various superblocks leads to various cases of the storage scheme.  While generating various cases we consider an arbitrary subset $S=\{n_1,n_2,n_3,n_4,n_5\}$ of size five given to be stored. Each block is either sent to  Table $T_0$ or $T_1$, and we store its bit-vector there. While sending blocks to either $T_0$ or $T_1$, we make sure that no two blocks sharing a bit have conflicting bit common in either of the tables, the correctness of the scheme relies on this fact. Keeping in mind the space constraints, we have discussed a few cases in this section, and for the sake of completeness the rest of the cases which can be handled in a similar fashion are mentioned in the~\ref{5:2upper}. Most of the cases generated and assignment made are similar to those generated in the previous paper on the problem by Baig {\em et al.}~\cite{DBLP:conf/walcom/BaigKS19} to store an arbitrary subset of size at most five.\\
\\
\textbf{Case 1 } All the elements belonging to $S$ belongs to the same superblock. In this case, we send all the blocks having elements from the set given to be stored to table $T_1$. All the empty blocks, i.e., blocks which do not have any elements from $S$ are sent to table $T_0$.\\
\\
\textbf{Case 2 } Four elements $S_1= \{n_1,n_2,n_3,n_4\}$ lies in one superblock and one $S_2=\{n_5\}$ in other. In this case, we send the block having element $n_5$ to table $T_1$ and rest all the blocks belonging to superblock which contains this element to table $T_0$. All the blocks which are having conflicting bit common with the block having element $n_5$ are sent to table $T_0$. Remaining all the blocks of superblocks which contains elements from $S_1$ are sent to table $T_1$. Furthermore, rest all the empty blocks of all the superblocks are sent to table $T_0$.\\
\\
 \textbf{Case 3 }
All the elements  $n_1,n_2,n_3,n_4$ and $n_5$ lies in the different  superblocks. In this case, we send all the blocks having elements to table $T_0$ and all the empty blocks to table $T_1$.

We conclude this section with the following theorem:
\begin{theorem}
\label{thm:upper}
There is a fully explicit two adaptive bitprobe scheme, which stores an arbitrary subset of size at most five, and uses $\mathcal{O}(m^{5/6})$ space. 	
\end{theorem}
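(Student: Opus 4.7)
My plan has two parts: (i) verify the claimed asymptotic space bound, and (ii) verify correctness of the storage scheme across all possible distributions of the five elements among the superblocks.

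Part (i) is a routine substitution into the formula already derived above,
\[
s(x,z,t) = c_1 \cdot \frac{m^2}{x^3 z^2} + c \cdot \frac{m^2}{x^2 z^2 t} + \frac{m}{x} + \frac{m}{t} + x^2 z t.
\]
Plugging in $x = t = m^{1/6}$ and $z = m^{2/6}$, I would check term by term that each of the five summands is $\Theta(m^{5/6})$: for instance $x^2 z t = m^{2/6} \cdot m^{2/6} \cdot m^{1/6} = m^{5/6}$, and $m/x = m/t = m^{5/6}$, with the other two terms handled analogously. Summing gives the $\O(m^{5/6})$ bound.

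Part (ii) is a case analysis on how the five elements of $S$ distribute among superblocks, indexed by the integer partitions of $5$: namely $5$, $4{+}1$, $3{+}2$, $3{+}1{+}1$, $2{+}2{+}1$, $2{+}1{+}1{+}1$, and $1{+}1{+}1{+}1{+}1$. Cases $5$, $4{+}1$, and $1{+}1{+}1{+}1{+}1$ are precisely Cases~1, 2, 3 already stated; the remaining four partitions are the ones handled (in the same spirit) in the appendix at Section~\ref{5:2upper}. For each partition I would describe which blocks are routed to $T_1$ (with $T$-bit set to $1$) and which to $T_0$ (with $T$-bit set to $0$), then verify that every element of $S$ is ultimately answered \texttt{Yes} and every non-element \texttt{No}.

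The main obstacle is ensuring, in each case, that when two blocks are forced into the same table and share a bit there, their required values of that shared bit are compatible. Two structural features make this possible and must be invoked repeatedly: first, any two blocks share at most one bit in $T_1$, a consequence of using slope $1/i$ for the $i$th superblock so that different-slope lines intersect in at most one grid point; and second, the dotted-line structure in $T_0$ aggregates entire empty blocks onto shared $T_0$-bits without conflict. Exploiting this freedom, the recipe for the harder intermediate partitions (e.g.\ $3{+}2$, $2{+}2{+}1$) is to send the blocks actually containing elements of $S$ to $T_1$, reroute to $T_0$ every block that would otherwise collide with those $T_1$-blocks on a conflicting bit, and send all remaining empty blocks to $T_0$ by default; the at-most-one-bit-shared property guarantees that the reroute never forces a new conflict in $T_0$. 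Once each of the seven cases is discharged in this fashion, together with part (i) this establishes Theorem~\ref{thm:upper}.
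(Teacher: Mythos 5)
Your Part (i) matches the paper exactly, and your enumeration of the seven integer partitions of $5$ is precisely the paper's case structure (Cases~1--3 in the main text, Cases~4--7 in the appendix), so the overall skeleton is the same. The genuine gap is in your uniform ``recipe'' for discharging the intermediate partitions, and it is exactly where all the difficulty of the paper's proof lives. Your claim that one can ``send the blocks containing elements of $S$ to $T_1$, reroute conflicting blocks to $T_0$, and the at-most-one-bit-shared property guarantees the reroute never forces a new conflict in $T_0$'' is false as stated, because the dangerous propagation is not \emph{within} $T_0$ but \emph{back into} $T_1$: all blocks lying on the same dotted line share a single block of size $t$ in $T_0$, so the moment a nonempty block is routed to $T_0$ its bit-vector (containing $1$s) occupies that shared $T_0$ block, and every other block on that dotted line is thereby forced into $T_1$, where it must read all zeros and may now meet a conflicting bit with one of the nonempty blocks you placed in $T_1$ at the start. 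These constraint chains cascade $T_1 \to T_0 \to T_1$, and the at-most-one-shared-bit property (two lines of different slopes meet in at most one grid point) only bounds pairwise intersections; it does not terminate the cascade. This is precisely why the paper's appendix resolves Cases~4--7 through deeply nested sub-cases keyed to geometric predicates --- whether the dotted line through a rerouted block passes through a block containing another element of $S$, whether two element-bearing blocks lie on the same dotted line, whether the blocks of $n_4$ and $n_5$ conflict among themselves, and so on --- with a bespoke assignment in each sub-case rather than a single closing invariant.

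A second, smaller error: your recipe even points the wrong way in several cases. For the partition $1{+}1{+}1{+}1{+}1$ (the paper's Case~3), and likewise in Cases~5.3 and~6.3, the paper sends all \emph{nonempty} blocks to $T_0$ and all empty blocks to $T_1$ --- the opposite of your default --- because blocks from distinct superblocks never share a dotted line, so each nonempty block gets its own $T_0$ block, while the empty blocks forced into $T_1$ all demand zeros and hence cannot conflict with one another. Your proposal would instead place five nonempty blocks in $T_1$, where pairwise conflicting bits between element-bearing blocks of different superblocks are possible and your reroute rule then triggers exactly the unresolved cascade described above. So while your plan identifies the right case decomposition and the right two structural facts, it does not constitute a proof of Theorem~\ref{thm:upper}: the verification you defer to a one-sentence invariant is the substantive content of the paper's storage-scheme argument, and that invariant does not hold.
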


\section{Conclusion}

In this paper, we have studied those schemes that store subsets of size at most five and answer membership queries using two adaptive bitprobes. Our first result improves upon the known lower bounds for the problem by generalising the notion of universe of sets in Kesh~\cite{DBLP:conf/fsttcs/Kesh18} to what may be referred to as {\em second order} universe. We hope that suitably defining still higher order universes will help address the lower bounds for subsets whose sizes are larger than five. Though the lower bound of $\Omega(m^{3/4})$ is an improvement, we believe that it is not tight.

We have also presented an improved scheme for the problem. It refines the approach taken by Baig {\em et al.}~\cite{DBLP:conf/walcom/BaigKS19} and alleviates the need for blocks that overlap completely to save space. This approach helps us achieve an upper bound of $\O(m^{5/6})$ which is a marked improvement over existing schemes.

\bibliographystyle{splncs04}
\bibliography{references}

\newpage
\appendix
\renewcommand\thesection{Appendix \Alph{section}}

\section{Lower Bound}
\label{appn:lower}

In this section, we prove our expression for the space lower bound. We start by proving a simple fact about sum of products.

\begin{claim}
	Given that $\sum_{i = 1}^n a_i \geq C_1$ and $\sum_{i = 1}^n b_i \geq C_2$, then
	\[
		\sum_{i = 1}^n a_i b_i \geq \frac{C_1 C_2}{n}.
	\]
\end{claim}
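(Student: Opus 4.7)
The plan is to derive this from Chebyshev's sum inequality, which is the natural tool for lower-bounding $\sum a_i b_i$ by a product $(\sum a_i)(\sum b_i)$. Concretely, I would prove
\[
n \sum_{i=1}^{n} a_i b_i \;\geq\; \Bigl(\sum_{i=1}^{n} a_i\Bigr)\Bigl(\sum_{i=1}^{n} b_i\Bigr),
\]
and combine it with the given hypotheses $\sum a_i \geq C_1$ and $\sum b_i \geq C_2$ to conclude $\sum_i a_i b_i \geq C_1 C_2 / n$.

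The first step is to establish the algebraic identity
\[
n \sum_{i=1}^{n} a_i b_i - \Bigl(\sum_{i=1}^{n} a_i\Bigr)\Bigl(\sum_{i=1}^{n} b_i\Bigr) = \sum_{1 \leq i < j \leq n} (a_i - a_j)(b_i - b_j),
\]
which is verified by expanding the right-hand side and collecting cross-terms. Under the hypothesis that the two sequences are similarly ordered (i.e., there is a single permutation of the indices that makes both of them non-decreasing), every summand $(a_i - a_j)(b_i - b_j)$ on the right is non-negative, so the left-hand side is non-negative and Chebyshev's inequality follows. Dividing by $n$ yields the claim.

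The main obstacle is this sorting hypothesis: the literal statement fails without it, as the example $a = (1,0)$, $b = (0,1)$ shows ($\sum a_i b_i = 0$ while $C_1 C_2 / n = 1/2$). In the intended application to bounding $\sum_{e \in \U} |\UB(e)|$, the indices range over blocks of table $\A$ and sets of table $\C$, and the relevant $a_i$ and $b_i$ are correlated size parameters; after relabelling one can arrange for them to be similarly ordered, so the hypothesis is benign. If instead the claim is meant in the Cauchy--Schwarz special case with $a_i = b_i$ and $C_1 = C_2$, then the same identity with $b_i = a_i$ reduces it to $\sum a_i^2 \geq (\sum a_i)^2 / n$, which is immediate from the AM--QM inequality and already suffices for the downstream calculation, where the extremum is achieved when all block and set sizes equal $m/s$.
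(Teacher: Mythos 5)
Your counterexample is the heart of the matter and it is correct: for $a=(1,0)$, $b=(0,1)$, $C_1=C_2=1$, $n=2$ we get $\sum_i a_ib_i = 0 < 1/2 = C_1C_2/n$, so the claim is false as stated, and therefore no proof of it can be sound -- including the paper's own. Concretely, the paper's argument amounts to
\[
	2\sum_{i=1}^n a_ib_i \;=\; \sum_{i=1}^n (a_i+b_i)^2 - \sum_{i=1}^n a_i^2 - \sum_{i=1}^n b_i^2 \;\geq\; \frac{(C_1+C_2)^2}{n} - \sum_{i=1}^n a_i^2 - \sum_{i=1}^n b_i^2,
\]
after which it substitutes the \emph{lower} bounds $\sum_i a_i^2 \geq C_1^2/n$ and $\sum_i b_i^2 \geq C_2^2/n$ where \emph{upper} bounds would be required; equivalently, the middle inequality in the paper's displayed chain (its second line $\geq$ its third) is exactly the statement being proved. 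Your Chebyshev identity $n\sum_i a_ib_i - \left(\sum_i a_i\right)\left(\sum_i b_i\right) = \sum_{i<j}(a_i-a_j)(b_i-b_j)$ does yield a correct proof when the sequences are similarly ordered (and $C_1, C_2 \geq 0$), and it is a genuinely different decomposition from the paper's $(a_i+b_i)^2$ computation.

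The gap is in your repair. Similar ordering is a property of the pairing $i \mapsto (a_i,b_i)$ and is invariant under relabelling: a permutation of $\{1,\dots,n\}$ carries the pairs along, so it can never turn oppositely ordered sequences into similarly ordered ones -- your own counterexample stays oppositely ordered under both of its relabellings. Moreover, in the intended application the pairing is dictated by the data structure, not free to choose: in the second of the three estimates in the appendix, $b_Z = |Z|$ for a block $Z$ of table $\A$ is paired with $a_Z = \sum_{g \in Z}(|\B(g)|-1)$, and nothing prevents large blocks from meeting only small $\B$-sets, so the similarly-ordered hypothesis is not ``benign'' there. Your fallback $a_i = b_i$ is sound but rescues only the first estimate, where indeed the coefficient of $|X|$ is $c_X = |X|$ because $\sum_{e\in\U}|\B(e)| = \sum_{X\in\B}|X|^2$; the second and third estimates pair two genuinely different sequences ($c_Z$ with $|Z|$, and $c_Y$ with $|Y|$), so the Cauchy--Schwarz special case does not, as you assert, suffice for the downstream calculation. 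As it stands, both the paper's proof of the claim and your proposed rescue leave the lower-bound lemma unsupported at those two steps.
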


\begin{proof}
	Consider the following sum --
	\[
		\sum_{i = 1}^n (a_i + b_i)^2.
	\]
	This is minimised when the all of the summands are equal. Thus,
	\[
		\sum_{i = 1}^n (a_i + b_i)^2 \geq \sum_{i = 1}^n \left( \frac{C_1 + C_2}{n} \right)^2 = \frac{\left( C_1 + C_2 \right)^2}{n}.
	\]
	We can now expand the sum to prove the desired inequality.
	\begin{align*}
		\sum_{i = 1}^n (a_i + b_i)^2 &= \sum_{i = 1}^n a_i^2 + \sum_{i = 1}^n b_i^2 + \sum_{i = 1}^n 2 a_i b_i \\
		&\geq n \left( \frac{C_1}{n} \right)^2 + n \left( \frac{C_2}{n} \right)^2 + \sum_{i = 1}^n 2 a_i b_i \\
		&\geq \frac{\left( C_1 + C_2 \right)^2}{n} \\
		\implies \sum_{i = 1}^n 2 a_i b_i &\geq \frac{\left( C_1 + C_2 \right)^2}{n} - n \left( \frac{C_1}{n} \right)^2 - n \left( \frac{C_2}{n} \right)^2 \\
		&= 2 \frac{C_1 C_2}{n}.
	\end{align*}
	\qed
\end{proof}

We apply the claim above repeatedly to prove the following lemma. It is important to note that the sum is computed w.r.t. table $\B$, and in table $\B$ all the elements are good.

\begin{lemma}
	$\sum_{e \in \U} \mid \UB(e) \mid \ \geq \ c \cdot \frac{m^4}{s^3}$.
\end{lemma}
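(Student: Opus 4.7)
My plan is to open up the definition of $\UB(e)$, use the goodness hypothesis together with Assumption~2 to replace both of the nested unions appearing in that definition with \emph{disjoint} unions, swap the order of summation to expose the sizes of the blocks of $\A$ and the sets of $\B$ and $\C$, and then invoke the preceding Claim twice.

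The first step is the union-to-sum conversion. Assumption~2 guarantees that the elements of $\B(e) \setminus \{e\}$ lie in pairwise distinct blocks of $\A$, so $\Ub(e)$ is genuinely the disjoint union $\bigcup_{g \in \B(e) \setminus \{e\}} \A(g) \setminus \{g\}$. The fact that $e$ is good w.r.t.\ table $\B$ (granted for every $e \in \U$ by Lemma~\ref{lem:good}) says that no two elements of $\Ub(e)$ share a set in $\C$, and since the sets of $\C$ partition $\U$, the family $\{\C(f) \setminus \{f\}\}_{f \in \Ub(e)}$ is also pairwise disjoint. Putting these together,
\begin{equation*}
\sum_{e \in \U} |\UB(e)| \;=\; \sum_{e} \sum_{g \in \B(e) \setminus \{e\}} \sum_{f \in \A(g) \setminus \{g\}} \bigl( |\C(f)| - 1 \bigr).
\end{equation*}

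Next I would swap the $e$ and $g$ summations. For a fixed $g$, the $e$'s with $g \in \B(e) \setminus \{e\}$ are precisely the members of $\B(g) \setminus \{g\}$, contributing a factor of $|\B(g)| - 1$. Defining $T(g) := \sum_{f \in \A(g) \setminus \{g\}} (|\C(f)| - 1)$, I obtain
\begin{equation*}
\sum_e |\UB(e)| \;=\; \sum_g \bigl( |\B(g)| - 1 \bigr)\, T(g).
\end{equation*}
A further swap of $g$ and $f$ inside $T$ gives $\sum_g T(g) = \sum_f (|\A(f)| - 1)(|\C(f)| - 1)$. Cauchy--Schwarz applied to the block and set sizes (each summing to $m$ over $s$ classes) yields $\sum_{g \in \U} (|\B(g)| - 1) = \sum_i |B_i|^2 - m \geq m^2/s - m$, with the analogous bound for $\A$ and $\C$. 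Applying the appendix Claim with $n = m$ first to $\sum_f (|\A(f)| - 1)(|\C(f)| - 1)$ gives $\sum_g T(g) = \Omega(m^3/s^2)$, and a second application to $\sum_g (|\B(g)| - 1)\,T(g)$ delivers the desired $\Omega(m^4/s^3)$ bound.

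The main point to get right is the linearisation in step one, since the entire calculation collapses if either union fails to be disjoint; this is exactly why Assumption~2 and the reduction to good schemes (Lemma~\ref{lem:good}) were set up earlier in the section. A minor subtlety is the $-1$ corrections in the Cauchy--Schwarz estimates, which can be absorbed into the constant $c$ provided $s = o(m)$; this is safe because the bound we are chasing, $s = \Omega(m^{3/4})$, is itself sublinear, so any regime with $s = \Theta(m)$ trivially satisfies the conclusion.
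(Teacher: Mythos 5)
Your proposal is correct to exactly the same standard as the paper's own proof --- both arguments stand or fall with the appendix Claim on sums of products, which you, like the paper, invoke as a black box --- and it follows the same overall strategy: use Assumption~2 and goodness to replace both unions in the definition of $\UB(e)$ by disjoint unions, reduce to a triple sum, and run a two-stage cascade $m^2/s \to m^3/s^2 \to m^4/s^3$. The bookkeeping, however, is genuinely different. The paper keeps the computation set-indexed: it collects coefficients over the sets of $\B$, then the blocks of $\A$, then the sets of $\C$, applying the Claim with $n = s$ at each stage, so its intermediate quantity is in effect $\sum_g (|\B(g)|-1)(|\A(g)|-1)$. You instead collapse the $e$- and $g$-summations to the element-indexed form $\sum_g (|\B(g)|-1)\,T(g)$ with $T(g) = \sum_{f \in \A(g) \setminus \{g\}} (|\C(f)|-1)$, and apply the Claim with $n = m$, pairing $\A$ with $\C$ first, so your intermediate quantity is $\sum_f (|\A(f)|-1)(|\C(f)|-1)$. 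Both pairings land on the same exponents, since $C_1 C_2 / n$ evaluates to $(m^2/s)(m^3/s^2)/m$ in your version and $(m^3/s^2)\cdot m/s$ in the paper's; your element-indexed route has the small advantage of making the swap-of-summation steps explicit and checkable (the identification of the $e$'s with $\B(g)\setminus\{g\}$, and of the $g$'s with $\A(f)\setminus\{f\}$, are both right). Two further points in your favour: you state the disjointness justifications (Assumption~2 for the blocks, goodness for the $\C$-sets) that the paper leaves implicit in its displayed equalities, and you flag the absorption of the $-m$ correction terms, which silently requires $s \le \delta m$ for some constant $\delta < 1$; indeed the lemma as literally stated fails for $s$ near $m$ (all sets singletons makes the left side zero), so your remark that this regime is harmless for the ultimate $\Omega(m^{3/4})$ bound --- though it conflates the lemma's inequality with the theorem's conclusion --- addresses a gap the paper papers over with its ``suitable coefficient.''
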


\begin{proof}
	We have the following expression for the sum of the sizes of all $2$-universes of all elements.
	\begin{align*}
		\sum_{e \in \U} \mid \UB(e) \mid & = \sum_{e \in \U} \mid \bigcup_{f \in \Ub(e)} \C(f) \setminus \{ f \} \mid \\
		&= \sum_{e \in \U} \left( \sum_{f \in \Ub(e)} \mid \C(f) \setminus \{ f \} \mid \right)
	\end{align*}
	We could convert the union in the expression above into the summation as no two elements of $\Ub(e)$ share a set. We can similarly expand $\Ub(e)$ from Definition~\ref{def:u}.
	\begin{align*}
		\sum_{e \in \U} \mid \UB(e) \mid &= \sum_{e \in \U} \left( \sum_{f \in \Ub(e)} \mid \C(f) \setminus \{ f \} \mid \right) \\
		&= \sum_{e \in \U} \left( \sum_{g \in \B(e) \setminus \{ e \}} \left( \sum_{f \in \A(g) \setminus \{ g \}} \left( \mid \C(f) \setminus \{ f \} \mid \right) \right) \right)
	\end{align*}
	
	We will first compute the value of the following expression.
	\[
	\begin{aligned}
		\sum_{e \in \U} |\B(e) \setminus \{ e \}| &= \sum_{e \in \U} \left( |\B(e)| - 1 \right) = \sum_{e \in \U} |\B(e)| - m \\
		&= \sum_{X \in \B} c_X |X| - m. &\text{(collecting over the sets of $\B$)}
	\end{aligned}
	\]
	Here, the sum of the coefficients $c_X$ is $m$, and the number of terms, which is same as the number of sets of $\B$, is $s$. Further, $\sum_{X \in \B} |X| = m$. So, applying the above claim,
	\begin{align*}
		\sum_{e \in \U} |\B(e) \setminus \{ e \}| &= \sum_{X \in \B} c_X |X| - m \\
		&\geq \frac{m \cdot m}{s} - m & \text{(sum of the sizes of the sets of $\B$ is $m$)} \\
		&\geq c \frac{m^2}{s} & \text{(for some suitable coefficient c)}
	\end{align*}
	Next, we compute an expression the sum of whose coefficients is the above sum.
	\begin{align*}
		\sum_{e \in \U} \left( \sum_{g \in \B(e) \setminus \{ e \}} |\A(g) \setminus \{ g \}| \right) &\geq c \sum_{Z \in \A} c_Z |Z|, &\text{(collecting over the blocks of $\A$)} \\
		&\geq c' \cdot \frac{\frac{m^2}{s} m}{s} = c' \frac{m^3}{s^2}. & \text{(for some suitable coefficient c')}
	\end{align*}
	We finally compute the desired expression of which the sum of coefficients is the above expression.
	\begin{align*}
		\sum_{e \in \U} \mid \UB(e) \mid &= \sum_{e \in \U} \left( \sum_{g \in \B(e) \setminus \{ e \}} \left( \sum_{f \in \A(g) \setminus \{ g \}} \left( \mid \C(f) \setminus \{ f \} \mid \right) \right) \right) \\
		&\geq c \cdot \sum_{Y \in \C} c_Y |Y|, &\text{(collecting over the sets of $\C$)} \\
		&= c' \cdot \frac{\frac{m^3}{s^2} m}{s} = c' \frac{m^4}{s^3}. & \text{(for some suitable coefficient c')}
	\end{align*}
	\qed

\end{proof}

\section{Storage Scheme}
\label{5:2upper}
Rest of the case of Section~\ref{appn:upper} is discussed here.\\
\\
\textbf{Case 4 } Three elements $S_1=\{n_1,n_2,n_3\}$ belong to one superblock and two elements $S_2=\{n_4,n_5\}$ to the other.\\
\\
\textbf{Case 4.1 }
All the blocks to which elements from $S_1$ belong lies on the same dotted line of their superblock.\\
\\		
\textbf{Case 4.1.1 }		
Two blocks to which elements from $S_2$ belong have a conflicting bit common with the blocks corresponding to the elements from $S_1$ in table $T_1$. In this case, we send the blocks having elements from $S_2$ to table $T_0$. Further, we send empty blocks lying on the dotted lines to which blocks having elements from $S_2$ belongs to table $T_1$. We send all the blocks which contain elements from $S_1$ in table $T_1$. We send the rest of the empty blocks to table $T_0$.\\  
\\		 		
\textbf{Case 4.1.2 }	
Only one block which contains an element from $S_2$ has a conflicting bit common with the block corresponding to the elements from $S_1$ in table $T_1$. In this case, we send all the blocks which contain elements from $S_1$ to table $T_1$. We send the block having an element from $S_2$, and having conflicting bit common with block having an element from $S_1$, to table $T_0$, and the rest of the blocks which lies on the dotted line containing this block to table $T_1$. If after this other nonempty block having an element from $S_2$ is still unassigned then we send it to table $T_1$, and all the empty blocks lying on the dotted line containing this block to table $T_0$ . Rest all the empty blocks are sent to table $T_0$.\\   
\\		
\textbf{Case 4.1.3 }
None of the blocks which contain an element from $S_2$ have a conflicting bit common with the block which includes an element from $S_1$ in table $T_1$ . In this case, we send all the nonempty blocks to table $T_1$ and all the empty blocks to table $T_0$.\\  
\\			
\textbf{Case 4.2 }
Two blocks which contain elements say $n_1$ and $n_2$ from $S_1$ lies on the same dotted line and other say $n_3$ lies on a different dotted line.\\
\\			 
\textbf{Case 4.2.1 }
All the blocks which contain elements from  $S_2$ have a conflicting bit common with blocks which include elements from $S_1$ in table $T_1$.\\
\\
\textbf{4.2.1.1 }
Let us first consider the case where blocks having elements from $S_2$ have a conflicting bit common with the blocks having elements $n_1$ and $n_2$. In this case, we send the blocks having element $n_4$ and $n_5$ to  table $T_0$, and all the blocks lying on the dotted lines containing these block to table $T_1$. Further, we send the blocks having elements $n_1$ and $n_2$ to table $T_1$. Block having element $n_3$ is sent to table $T_0$, and all the empty blocks lying on the dotted line containing this block is sent to table $T_1$. Rest all the empty blocks are sent to table $T_0$.\\
\\     
\textbf{4.2.1.2 }
Without loss of generality let us now consider the case where blocks having an element from $S_2$ have a conflicting bit common with blocks having element $n_1$ and $n_3$. In this case, we send the blocks having elements $n_4$ and $n_5$ to table $T_1$, rest all the blocks lying on the dotted line(lines) containing these blocks to table $T_0$. Further, we send the blocks having element $n_1$ and $n_3$ to table $T_0$, and all the blocks lying on the dotted lines containing these blocks to table $T_1$. Rest all the blocks are sent to table $T_0$.\\
\\
\textbf{4.2.1.3}
Now we are left with a case where block having an element from $S_2$ have a conflicting bit common with blocks having elements $n_1, n_2$ and $n_3$. In this case, we send the blocks having element $n_4$ and $n_5$ to table $T_0$, and all the empty blocks lying on the dotted lines containing this block to table $T_1$. Further, we send all the blocks having elements from $S_1$	to table $T_1$, and rest all the empty blocks to table $T_0$.\\
\\		 
\textbf{Case 4.2.2 }
Only one block having an element from $S_2$ have a conflicting bit common with the block(blocks) having an element(elements) from $S_1$.\\
\\
\textbf{Case 4.2.2.1 } All the blocks having elements from $S_2$ lies on the same dotted line. Without loss of generality, let us say block having element $n_4$ from $S_2$ have a conflicting bit common with a block having an element from $S_1$. In this case, we send the blocks having element $n_4$ and $n_5$ to table $T_1$, and all the empty blocks lying on the dotted line containing these blocks to table $T_0$. Further, we send the block(blocks) having an element(elements) from $S_1$, and having a conflicting bit common with a block having element $n_4$  to table $T_0$, and all the blocks lying on the dotted line containing this block to table $T_1$. We now send the rest of the block(blocks) having an element from $S_1$ to table $T_1$. Rest all the empty blocks are sent to table $T_0$.\\
\\
\textbf{Case 4.2.2.2 } Now let us consider a case where blocks having an element from $S_2$ lies on the different dotted line. Without loss of generality lets say block having element $n_4$ have a conflicting bit common with a block(blocks) having an element(elements) from $S_1$. 

Let us first consider the case where block having element $n_4$ have a conflicting bit common with either block having element $n_1$ or $n_2$. Without loss of generality, let us say block having element $n_4$ have a conflicting bit common with a block having element $n_1$. In this case, we send the block having element $n_3,n_4$ and $n_5$ to table $T_0$, and all the empty blocks lying on the dotted line containing these blocks to table $T_1$. Now, we see the positions of the blocks having elements $n_1$ and $n_2$. Let us first consider the case where blocks having element $n_1$ or $n_2$ have conflicting bit common with one of the empty blocks lying on the dotted line which contains block having element $n_5$. Without loss of generality let us say that block having element $n_1$ have conflicting bit common with one of the empty blocks lying on the dotted line which contains block having element $5$. In this case, we send the block having element $n_1$ to table $T_0$, and rest all the blocks lying on the dotted line containing this block to table $T_1$. Rest all the empty blocks are sent to table $T_0$. On the other hand, if the block having element $n_1$ or $n_2$ do not have conflicting bit common with empty blocks lying on the dotted line which contains block having element $n_5$, then we send the blocks having elements $n_1$ and $n_2$ to table $T_1$, and rest all the empty blocks lying on the dotted line containing these blocks to the table $T_0$. Rest all the empty blocks are sent to the table $T_0$.

Now let us consider the case where block having element $n_4$ have a conflicting bit common with a block having element $n_3$. Now, we see if the block having element $n_4$ have conflicting bit common with block having element $n_1$ or $n_2$. Without loss of generality, let us say that block having element $n_4$ have conflicting bit common with block having element $n_1$. In this case, we can use the assignment made in previous paragraph. On the other hand, if the block having element $n_4$ do not have conflicting bit common with block having element $n_1$ or $n_2$, then we send the blocks having elements $n_3$ and $n_5$ to table $T_0$, and all the empty blocks lying on the dotted lines containing these blocks to table $T_1$. Further, we send the block having element $n_4$ to table $T_1$, and all the empty blocks lying on the dotted line containing this block to table $T_0$. Now we see the position of the blocks having elements $n_1$ and $n_2$. Let us first consider the case where on of the blocks having elements $n_1$ or $n_2$ has conflicting bit common with one of the empty blocks lying on the dotted line which contains block having element $n_5$. Without loss of generality, let us say that block having element $n_1$ has conflicting bit common with one of the empty blocks lying on the dotted line which contains block having element $n_5$. In this case, we send the block having element $n_1$ to table $T_0$, and all the blocks lying on the dotted line containing this block to table $T_1$. Rest all the empty blocks are sent to table $T_0$. On the other hand, if none of the block having element $n_1$ or $n_2$ has conflicting bit common with empty blocks lying on the dotted line which contains block having element $n_5$, then we send the block having element $n_1$ and $n_2$ to table $T_1$, and rest all the empty blocks to table $T_0$.\\
\\
\textbf{Case 4.2.3 }
None of the blocks which contain an element from $S_2$ have a conflicting bit common with the block which includes an element from $S_1$ in table $T_1$ . In this case, we send all the nonempty blocks to table $T_1$ and all the empty blocks to table $T_0$.\\
\\
\textbf{Case 4.3 }All the blocks having an element from $S_1$ lies on the different dotted lines.\\ 
\\
\textbf{Case 4.3.1 }
Both the blocks having elements $n_4$ and $n_5$ have a conflicting bit common with the blocks having elements from $S_1$ in table $T_1$. Now we see the positions of the blocks having elements $n_4$ and $n_5$. Blocks having elements $n_4$ and $n_5$ can either lie on the same dotted line or on the different dotted lines. If the blocks having elements $n_4$ and $n_5$ lie on the different dotted lines, then we send all the blocks having elements to table $T_0$, and all the empty blocks to table $T_1$. Now we consider the case where blocks having elements $n_4$ and $n_5$ lies on the same dotted line. Furthermore, without loss of generality let us consider that blocks having elements $n_4$ and $n_5$ conflicts with blocks having elements $n_1$ and $n_2$ respectively. In this case, we send the blocks having elements $n_1$ and $n_2$ to table $T_0$, and all the empty blocks lying on the dotted lines containing these blocks to the table $T_1$. Further, we send the blocks having elements $n_4$ and $n_5$ to table $T_1$, and all the empty blocks lying on the dotted line containing these blocks to table $T_1$. Now, we see the position of the block having element $n_3$. If the block having element $n_3$ has conflicting bit common with blocks having elements $n_4$ or $n_5$, then we send the block having element $n_3$ to table $T_0$, and all the empty blocks lying on the dotted line which contains this block to table $T_1$. Rest all the empty blocks to table $T_0$. On the other hand, if the block having element $n_3$ do not have conflicting bit common with block having element $n_4$ or $n_5$, then we send the  block having element $n_3$ to table $T_1$, and rest all the empty blocks to table $T_0$.   \\
\\	
\textbf{Case 4.3.2 }
Only one of the block having element say $n_4$ from $S_2$ have a conflicting bit common with blocks having an element from $S_1$ in table $T_1$. Similar to the last case, in this case also we see whether blocks having elements
$n_4$ and $n_5$ lie on the same dotted line or on the different dotted lines. If the blocks having elements $n_4$ and $n_5$ lies on the different dotted lines, then we send the blocks having elements to table $T_0$, and all the empty blocks to table $T_1$. Now, we consider the case where blocks having elements $n_4$ and $n_5$ lies on the same dotted lines.  Furthermore, without loss of generality let us say that block having the element $n_1$ have a conflicting bit common with the block having the element $n_4$. In this case, we send the blocks having elements $n_1$ and $n_4$ to table $T_0$, and all the empty blocks lying on the dotted lines containing these blocks to table $T_1$. Further, we send the blocks having elements $n_2,n_3$ and $n_5$ to table $T_1$. Rest all the empty blocks are sent to table $T_0$.\\   	
\\		
\textbf{Case 4.3.3 }
None of the blocks which contain an element from $S_2$ have a conflicting bit common with the block which includes an element from $S_1$. This case is the same as Case 4.1.3.\\
\\
\textbf{Case 5}
The elements in $S_1 =\{n_1,n_2,n_3\}$  lies in a superblock and the elements $n_4$ and $n_5$ in the different superblocks.\\ 
\\	   		
\textbf{Case 5.1}
Blocks having element $n_1,n_2$ and $n_3$ lies on a same dotted line.\\
\\	   			
\textbf{Case 5.1.1}
Blocks having element $n_4$ and $n_5$ have a conflicting bit common with the blocks having elements from $S_1$ in table $T_1$. In this case, we send the blocks having elements $n_4$ and $n_5$ to table $T_0$ and the rest of the empty blocks which lie on the dotted lines containing these blocks to table $T_1$. We send the blocks having elements from $S_1$ to table $T_1$. Rest all the empty blocks are sent to table $T_0$.\\   
\\	   			
\textbf{Case 5.1.2}
Only one of the block having element say $n_4$ have a conflicting bit common with the block having an element from $S_1$ in table $T_1$. Without loss of generality let us say block having the element $n_1$ have a conflicting bit common with the block having the element $n_4$. In this case, we send the blocks having elements $n_4$ and $n_5$ to table $T_0$ and all the empty blocks lying on the dotted line containing these blocks to table $T_1$. Now we see whether the dotted line which contains block having the element $n_5$ passes through the block having element elements from $S_1$ or not. Let us first consider a case where the dotted line which contains block having the element $n_5$ passes through one of the blocks having elements from $S_1$, without loss of generality let us say it passes through block having the element $n_2$. In this case, we send the block having the element $n_2$ to table $T_0$ and rest all the blocks lying on the dotted line containing this block to table $T_1$. Rest all the empty blocks are sent to table $T_0$. On another hand, if the dotted line which contains block having the element $n_5$ does not pass through any of the block having element from $S_1$ then we send the blocks having elements from $S_1$ to table $T_1$. Rest all the empty blocks are sent to table $T_0$.  \\ 
\\
\textbf{Case 5.1.3}
None of the blocks which contains element  $n_4$ or $n_5$ have a conflicting bit common with the blocks which include an element from $S_1$ in table $T_1$. Now, the blocks having elements $n_4$ and $n_5$ can conflict among themselves or it does not. If the block having elements $n_4$ and $n_5$ do not conflict among themselves, then we send all the blocks having elements to table $T_1$, and rest all the empty blocks to table $T_0$. On the other hand, if the blocks having elements $n_4$ and $n_5$ conflict among themselves, then we see whether they conflict on the dotted line which contains block having elements from $S_1$. If they conflict on the dotted line which contains block having elements from $S_1$, then we send the blocks having elements $n_4$ and $n_5$ to table $T_0$, and all the empty blocks lying on the dotted line containing this block to table $T_1$. Further, we send the blocks having elements from $S_1$ to table $T_1$, and rest all the empty blocks to table $T_0$. If the blocks having elements $n_4$ and $n_5$ do not conflict on the dotted line which contains blocks having elements from $S_1$,   then we send the block having element $n_4$ to table $T_0$, and all the empty blocks lying on the dotted line containing this block to table $T_1$. Further, we send the block having element $n_5$ to table $T_1$, and all the empty blocks lying on the dotted line containing this block to table $T_0$. Now, we see whether the dotted line which contains block having element $n_4$ passes through block having element $n_1$ or $n_2$. Without loss of generality, let us say that the dotted line which contains block having element $n_4$ passes through block having element $n_1$. In this case, we send the block having element $n_1$ to table $T_0$, and rest all the blocks lying on the dotted line containing this block to table $T_1$. Rest all the empty blocks are sent to table $T_0$. If the dotted lines which contains block having elements $n_4$ or $n_5$ do not pass through block having elements from $S_1$, then we send the blocks having elements from $S_1$ to table $T_1$, and rest all the empty blocks to table $T_0$. \\
\\	   			
\textbf{Case 5.2}
Two elements say $n_1$ and $n_2$ lies on the same dotted line and the element $n_3$ lies on a different dotted line.\\
\\	   				
\textbf{Case 5.2.1}
Blocks having the elements $n_4$ and $n_5$ have a conflicting bit common with the blocks having elements from $S_1$ in table $T_1$.\\
\\
\textbf{Case 5.2.1.1}
Blocks having the elements $n_4$ and $n_5$ have a conflicting bit common with the block having elements $n_1$ and $n_2$ in table $T_1$. In this case, we send the blocks having elements $n_4$ and $n_5$ to table $T_0$ and all the empty blocks lying on the dotted lines containing these blocks to table $T_1$. Also, we send the blocks having elements $n_1$ and $n_2$ to table $T_1$. We send the block which contains the element $n_3$ to table $T_0$ and the rest of the empty block lying on the dotted line containing this block to table $T_1$. Rest all the empty blocks are sent to table $T_0$.\\  
\\	   					
\textbf{Case 5.2.1.2}
Blocks having elements $n_4$ and $n_5$ have a conflicting bit common with the blocks lying on the different dotted lines, say block having elements $n_1$ and $n_3$ in table $T_1$. In this case, we send the blocks having elements $n_4$ and $n_5$ to table $T_1$ and the rest of the empty blocks lying on the dotted line containing these blocks to table $T_0$ . We send the blocks having elements $n_1$ and $n_3$ to table $T_0$ and the rest of the blocks lying on these dotted lines to table  $T_1$. Rest all the empty blocks are sent to table $T_0$.\\  
\\	   					
\textbf{Case 5.2.1.3}
Blocks having element $n_4$ and $n_5$ have a conflicting bit common in table $T_1$. If these blocks have a conflicting bit common with the block having the element $n_1$ or $n_2$ then we send both the blocks having the element $n_4$ and $n_5$ to table $T_0$ and the rest of the empty block lying on the dotted line containing these blocks to table $T_1$. Also, we send the blocks having the element $n_1, n_2$  and the blocks lying on the dotted line containing these blocks to table $T_1$. We send the block which contains element $n_3$ to table $T_0$ and the rest of the empty block which lies on the dotted line containing this block to table $T_1$. Rest all the empty blocks are sent to table $T_0$.

If the blocks containing elements $n_4$ and $n_5$ have a conflicting bit common with the block which contains element $n_3$ in table $T_1$ then we send the block containing element $n_3$ to table $T_0$ and the rest of the empty block which lie on the dotted line containing this block to table $T_1$. Also, we send the block containing $n_4$ to table $T_1$ and the rest of the empty block lying on the dotted line containing this block to table $T_0$. We send the block having the element $n_5$ to table $T_0$ and the rest of the empty block lying on the dotted line containing this block to table $T_1$. Now we see whether the dotted line which contains block having the element $n_5$ passes through the block having the element  $n_1$ or $n_2$. Without loss of generality let us first consider the case where the dotted line which contains block having the element $n_5$ passes through the block having the element $n_1$. In this case, we send the block having the element $n_1$ to table $T_0$ and all the blocks lying on the dotted line containing this block to table $T_1$. Rest all the empty blocks are sent to table $T_0$. If the dotted line which contains block having element $n_5$ does not pass through the blocks having elements $n_1$ or $n_2$ then we send the blocks having elements $n_1$ and $n_2$ to table $T_1$ and rest all the empty blocks to table $T_0$.\\ 
\\	   					
\textbf{Case 5.2.2}
Only one block having an element $n_4$ or $n_5$, have a conflicting bit common with a block having the element from $S_1$ in table $T_1$. Let us first consider the case where block having element $n_4$ have a conflicting bit common with the block having element $n_1$ or $n_2$. Without loss of generality let us say block having the element $n_4$ have a conflicting bit common with the block having the element $n_1$. Now we can have two cases, either the block having element $n_4$ conflicts with block having the element $n_3$ or it does not. Let us first consider the case where block having the element $n_4$ do not conflict with block having the element $n_3$. In this case, we send the blocks having the elements $n_3, n_4$ and $n_5$ to table $T_0$, and all the empty blocks lying on the dotted line containing these blocks to table $T_1$. Now we see whether the empty blocks lying on the dotted line which contains block having element $n_5$ passes through blocks having elements $n_1$ or $n_2$. Without loss of generality let us say that empty block lying on the dotted line which contains block having element $n_5$ passes through block having element $n_1$. In this case, we send the block having the element $n_1$ to table $T_0$ and rest all the blocks lying on the dotted line containing this block to table $T_1$. Rest all the empty blocks are sent to table $T_0$. On the other hand, if the dotted line which contains block having element $n_5$ do not pass through blocks having element $n_1$ or $n_2$, then we send the block having element $n_1$ and $n_2$ to $T_1$, and rest all the empty blocks to table $T_0$. Now we consider the case where block having element $n_4$ have conflicting bit common with block having element $n_3$. In this case, we send the block having element $n_4$ to table $T_0$, and all the empty blocks lying on the dotted line containing this block to table $T_1$. Now we see the position of the block having element $n_5$. If the block having element $n_5$ have conflicting bit common with empty block lying on the dotted line containing blocks having elements $n_1$ and $n_2$, then we send the block having element $n_1, n_2, n_3$ and $n_5$ to table $T_1$, and rest all the empty blocks to table $T_0$. If the block having element $n_5$ do not have conflicting bit common with empty block lying on the dotted line containing blocks $n_1$ and $n_2$, then we send the block having element $n_5$ to table $T_0$, and all the empty blocks lying on the dotted line containing this block to table $T_1$. Now we see whether the blocks having element $n_1$ or $n_2$ have conflicting bit common with empty block lying on the dotted line containing block having element $n_5$. Without loss of generality let us say that block having element $n_1$ have conflicting bit common with empty block lying on the dotted line containing block having element $n_5$. In this case, we send the block having element $n_1$ to table $T_0$, and rest all the blocks lying on the dotted line containing this block to table $T_1$. Further, we send the block having $n_3$ to table $T_0$, and all the empty blocks lying on the dotted line containing this block to table $T_1$. Rest all the empty blocks are sent to table $T_0$. If none of the blocks having element $n_1$ or $n_2$ have conflicting bit common with empty block lying on the dotted line containing element $n_5$, then we send the blocks having elements $n_3$ and $n_5$ to table $T_0$, and all the empty blocks lying on the dotted line containing this block to table $T_1$. Furthermore, we send the blocks having element $n_1$ and $n_2$ to table $T_1$, and rest all the empty blocks to table $T_0$. 

Now let us consider the case where block having the element $n_4$ have a conflicting bit common only with the block having the element $n_3$. Now we see whether the blocks having elements $n_4$ and $n_5$ conflicts or not. Let us first consider the case where blocks having elements $n_4$ and $n_5$ do not conflicts. Now we see the position of the block having element $n_5$. Let us first consider the case where block having element $n_5$ conflicts with empty block lying on the dotted line which contains block having element $n_1$. In this case, we send the block having the element $n_1, n_2, n_4$ and $n_5$ to table $T_1$, and rest all the empty blocks lying on the dotted lines containing these blocks to table $T_0$. We send the block having the element $n_3$ to table $T_0$, and all the empty blocks lying on the dotted line containing this block to table $T_1$. Rest all the empty blocks are sent to table $T_0$. If the block having the element $n_5$ have a conflicting bit common with a empty block lying on the dotted line which contains block having the element $n_3$, then we send the blocks having the elements $n_3$ and $n_5$ to table $T_1$, and all the empty blocks lying on the dotted lines containing these blocks to table $T_0$. We send the block having element $n_4$ to table $T_0$, and rest all the empty blocks lying on the dotted line  containing this block to table $T_1$. If the block having the element $n_1$  have a conflicting bit common with block lying on the dotted line which contains block having the element $n_4$, then we send the block having the element $n_1$ to table $T_0$, and rest all the blocks lying on the dotted line containing this block to table $T_1$. Rest all the empty blocks are sent to table $T_0$. Similar is the case if the block having the element $n_2$ have a conflicting bit common with block lying on the dotted line which contains block having the element $n_4$. On the other hand, if the dotted line which contains block having element $n_4$ do not pass through blocks having elements $n_1$ or $n_2$, then we send the blocks having elements $n_1$ and $n_2$ to table $T_1$, and rest all the empty blocks to table $T_0$. If none of the above case occurs, and the block having the element $n_5$ does not have a conflicting bit common with a block lying on the dotted line which contains block having the element $n_3$, then we send the block having the element $n_1, n_2, n_4$ and $n_5$ to table $T_1$, and rest all the empty blocks lying on the dotted lines containing these blocks to table $T_0$. Further, we send the block having element $n_3$ to table $T_0$, and all the empty blocks lying on the dotted line containing this block to table $T_1$. Rest all the empty blocks are sent to table $T_0$.  Now, we see the case where block having element $n_4$ and $n_5$ conflicts. Now we can have several cases depending upon whether blocks having element $n_4$ and $n_5$ conflicts with empty block lying on the dotted line which contains block having element $n_1$. Let us first consider the case where block having element $n_5$ conflicts with empty block lying on the dotted line which contains block having element $n_1$. In this case, we send the blocks having elements $n_3, n_4$ and $n_5$ to table $T_0$, and rest all the empty blocks lying on the dotted lines containing these blocks to table $T_1$. Now we see whether the dotted line which contains block having element $n_4$ passes through block having element $n_1$ or $n_2$. Without loss generality let us say that dotted line which contains block having element $n_4$ passes through block having element $n_1$. In this case, we send the block having the element $n_1$ to table $T_0$, and rest all the blocks lying on the dotted line containing this block to table $T_1$. Rest all the empty blocks are sent to table $T_0$. On the other hand, if the dotted line which contains block having element $n_4$ do not pass through blocks having elements $n_1$ or $n_2$, then we send the blocks having elements $n_1$ and $n_2$ to table $T_1$, and rest all the empty blocks to table $T_0$. Similar is the case when block having element $n_4$ conflicts with empty block lying on the dotted line which contains block having element $n_1$. If none of the above occurs, then we send the blocks having elements $n_3$ and $n_5$ to table $T_0$, and rest all the empty blocks lying on the dotted lines containing these blocks to table $T_1$. Further, we send the block having element $n_4$ to table $T_1$, rest all the empty blocks lying on the dotted line containing this block to table $T_0$. Now we see whether the dotted line which contains block having element $n_5$ passes through block having element $n_1$ or $n_2$. Without loss generality let us say that dotted line which contains block having element $n_5$ passes through block having element $n_1$. In this case, we send the block having the element $n_1$ to table $T_0$, and rest all the blocks lying on the dotted line containing this block to table $T_1$. Rest all the empty blocks are sent to table $T_0$. On the other hand, if the dotted line which contains block having element $n_4$ do not pass through blocks having elements $n_1$ or $n_2$, then we send the blocks having elements $n_1$ and $n_2$ to table $T_1$, and rest all the empty blocks to table $T_0$.    \\  
\\	   					
\textbf{Case 5.2.3}	
None of the blocks having the elements $n_4$ or $n_5$ have a conflicting bit common with blocks having elements from $S_1$ in table $T_1$. We can have several cases depending upon whether the blocks having element $n_4$ and $n_5$ conflicts or not. Let us first consider the case where blocks having elements $n_4$ and $n_5$ do not conflict. In this case we send all the blocks having elements to table $T_1$, and all the empty blocks to table $T_0$. Now let us consider the case where blocks having elements $n_4$ and $n_5$ conflicts. Now, we see the position of the block having elements $n_4$ and $n_5$. Let us first consider the case where blocks having elements $n_4$ and $n_5$ conflicts on the dotted line which contains block having element $n_1$. In this case, we send the blocks having elements $n_1$ and $n_2$ to table $T_1$, and rest all the empty blocks lying on the dotted line containing these blocks to table $T_0$. Further, we send the blocks having elements $n_3, n_4$ and $n_5$ to table $T_0$, rest all the empty blocks lying on the dotted lines containing these blocks to the table $T_1$. Rest all the empty blocks are sent to table $T_0$. Now let us consider the case where block having elements $n_4$ and $n_5$ conflicts outside the dotted line which contains block having element $n_1$. Now we can have a case where either block having elements $n_4$ or $n_5$ conflicts with empty block lying on the dotted line which contains block having element $n_1$. Without loss of generality let us say that block having element $n_4$ conflicts with empty block lying on the dotted line which contains block having element $n_1$. In this case, we send the blocks having elements $n_3, n_4$ and $n_5$ to table $T_0$, and rest all the empty blocks lying on the dotted line containing these blocks to table $T_1$. Now we see whether the dotted line which contains block having element $n_5$ passes through block having element $n_1$ or $n_2$. Without loss of generality, let us say that the dotted line which contains block having element $n_5$ passes through block having element $n_1$. In this case, we send the block having element $n_1$ to table $T_0$, and rest all the blocks lying on the dotted line containing this block to table $T_1$. Rest all the empty blocks are sent to table $T_0$. On the other hand, if the dotted line which contains block having element $n_5$ do not pass through block having element $n_1$ or $n_2$, then we send the block having element $n_1$ and $n2$ to table $T_1$, and rest all the empty blocks to table $T_0$. Now we consider the case where block having element $n_4$ or $n_5$ do not conflict with empty block lying on the dotted line which contains block having element $n_1$. In this case, we send the blocks having elements $n_3$ and $n_5$ to table $T_1$, and rest all the empty blocks lying on the dotted lines containing these blocks to table $T_0$. Further, we send the block having element $n_4$ to table $T_0$, and rest all the empty blocks lying on the dotted line containing this block to table $T_1$. Now we see if the dotted line which contains block having element $n_4$ passes through block having element $n_1$ or $n_2$. Without loss of generality let us say that dotted line which contains block having element $n_4$ passes through block having element $n_1$. In this case, we send the block having element $n_1$ to $T_0$, and rest all the blocks lying on the dotted line containing this block to table $T_1$. Rest all the empty blocks are sent to table $T_0$. If the dotted line which contains block having element $n_4$ do not pass through block having element $n_1$ or $n_2$, then we send the blocks having elements $n_1$ and $n_2$ to table $T_1$, and rest all the empty blocks to table $T_0$.   \\	
\\	   			
\textbf{Case 5.3}	
All the elements belonging to $S_1$ lies on the different dotted lines.  In this case, we send all the blocks having elements to table $T_0$ and all the empty blocks to table $T_1$.\\
\\	   				
\textbf{Case 6}
Two elements $S_1=\{n_1,n_2\}$ lies in a superblock other two elements $S_2=\{n_3,n_4\}$ lies in other superblock and an element $S_3=\{n_5\}$ in a different superblock.\\
\\	
\textbf{Case 6.1}
Blocks having elements belonging to $S_1$ lies on the same dotted line, blocks having elements belonging to $S_2$ lies on the same dotted line.\\
\\		
\textbf{Case 6.1.1}
Two blocks having elements from $S_2$ and $S_3$ have a conflicting bit common with the blocks having elements from $S_1$ in table $T_1$. Without loss of generality, let us say that block having element $n_3$ have a conflicting bit common with the block having the element $n_1$ and the block having the element $n_5$ have a conflicting bit common with the block having the element $n_2$. In this case, we send the block having the element $n_4$ to table $T_0$ and the rest of the block lying on the dotted line containing this block to table $T_1$. We send the block having the element $n_1$ to table $T_0$, and the rest of the block lying on this dotted line to table $T_1$. Also, we send the block having the element $n_5$ to table $T_0$, and the rest of the empty block lying on the dotted line containing this block to table $T_1$. Rest all the empty blocks are sent to table $T_0$.

If the blocks having element from $S_2$ and $S_3$ have a conflicting bit common with the same block having the element from $S_1$, then we send the blocks having a conflicting bit common from $S_2$ and $S_3$ to table $T_0$, and the rest of the blocks lying on these dotted lines to table $T_1$. Also, we send the blocks having elements from $S_1$ to table $T_1$. Rest all the empty blocks are sent to table $T_0$.\\ 
\\		
\textbf{Case 6.1.2}
Only one block having the element from $S_2$ or $S_3$ have a conflicting bit common with the block having the element from $S_1$.\\
\\			
\textbf{Case 6.1.2.1} 
Block having an element from $S_2$ have a conflicting bit common with the block having an element from $S_1$. 
Without loss of generality let us say that the block having the element $n_3$ have a conflicting bit common with the block having the element $n_1$. Now, we can have two cases, either the block containing element $n_5$ have a conflicting bit common with the block containing element $n_4$ or it does not. 

If the block containing $n_5$ have a conflicting bit common with the block containing $n_4$, then we send the block containing $n_3$ to table $T_0$, and the rest of the block lying on the dotted line containing this block to table $T_1$. We send the block containing $n_5$ to table $T_0$, and the rest of the block lying on the dotted line containing this block to table $T_1$. Now we see whether the dotted line which contains block having element $n_5$ passes through block having element $n_2$ or not. Let us first consider the case where dotted line which contains block having element $n_5$ passes through block having element $n_2$. In this case, we send the block having the element $n_2$ to table $T_0$, and the rest of the block lying on this dotted line to table $T_1$. Rest all the empty block are sent to table $T_0$. On the other hand, if the dotted line which contains block having element $n_5$ do not pass through block having element $n_2$, then we send the blocks having elements $n_1$ and $n_2$ to table $T_1$, and rest all the empty blocks to table $T_0$.

If the block containing $n_5$ do not have a conflicting bit common with the block having the element $n_4$, then we send the block having the element $n_3$ to table $T_0$ and the rest of the block lying on this dotted line to table $T_1$. Now we see the position of the block which contains the element $n_5$ to make the assignment. If the block which contains element $n_5$ have conflicting bit common with empty block lying on the dotted line which contains block having elements from $S_1$, then we send the block having the element $n_5$ to table $T_1$, and rest all the empty blocks lying on the dotted line which contains this block to table $T_0$. Also, we send blocks having elements from $S_1$ to table $T_1$, and the rest of the empty blocks lying on the dotted line which contains this block to table $T_0$. Rest of the empty blocks we send to table $T_0$. 

Rest for all other positions of $n_5$, we send blocks having elements $n_3$  to table $T_0$, and all the blocks lying on the dotted line containing this block to table $T_1$. Further, we send the block having the element $n_5$ to table $T_0$, and all other blocks lying on the dotted line containing this block to table $T_1$. Now we see whether the dotted line which contains block having element $n_5$ passes through block having element $n_1$ or $n_2$. Without loss of generality, let us say that dotted line which contains block having element $n_5$ passes through block having element $n_2$, in this case we send the block having element $n_2$ to table $T_0$, and rest all the blocks lying on the dotted line containing this block to table $T_1$.  Rest all the empty blocks are sent to table $T_0$. On the other hand if the dotted line which contains block having element $n_5$ do not pass through block having element $n_1$ or $n_2$, then we send the blocks having elements $n_1$ and $n_2$ to table $T_1$, and rest all the empty blocks to table $T_0$.\\  
\\			
\textbf{Case 6.1.2.2}
Block having an element from $S_3$ have a conflicting bit common with the block having an element from $S_1$. Without loss of generality let us say that block having element $n_5$ have a conflicting bit common with the block having the element $n_1$. In this case, we send the block having the element $n_5$ to table $T_0$ and the rest of the block lying on the dotted line which contains this block to table $T_1$. Now we see the position of the block having the element from $S_2$. 

One of the block having an element from $S_2$ have a conflicting bit common with the empty block on the dotted line which contains block having the element $n_5$. Without loss of generality let us say that block having element $n_3$ have a conflicting bit common with an empty block on the dotted line which contains block having the element $n_5$. In this case, we send the block having an element $n_3$ to table $T_0$, and the rest of the block lying on the dotted line containing this block to table $T_1$. Now we see the position of the block having element $n_4$, if it has a conflicting bit common with a empty block lying on the dotted line which contains block having element from $S_1$, then we send the block having element $n_1$ and $n_2$ to table $T_1$, and the rest of the empty block lying on the  dotted line containing these blocks to table $T_0$. Rest all the empty blocks are sent to table $T_0$. On another hand, if the block having the element $n_4$ does not have a conflicting bit common with a block lying on the dotted line which contains block having the element from $S_1$ then we send the block having the element $n_2$ to table $T_0$, and the rest of the block lying on the dotted line containing  this block to table $T_1$. We send the rest of the empty block to table $T_0$.

If the block having the element from $S_2$ do not have a conflicting bit common with a block lying on the dotted line which contains block having the element $n_5$ then we send the blocks having an element from $S_2$ to table $T_1$, and the rest of the empty block lying on the dotted line which contains these blocks to table $T_0$. Also, we send the blocks having the element $n_1$ and $n_2$ to table $T_1$. We send rest of the empty block to table $T_0$.\\ 
\\			
\textbf{Case 6.1.3}
None of the block having elements from $S_2$ or $S_3$ has a conflicting bit common with the blocks having an element from $S_1$ in table $T_1$. Now, here, we can have two cases. Either the block having the element from $S_3$ have a conflicting bit common with the block having the element from $S_2$ or it does not.

Let us first consider the case where one of the blocks having the element from $S_3$ have a conflicting bit common with one of the blocks having an element from $S_2$. Without loss of generality, we can say that block having the element $n_5$  have a conflicting bit common with the block having the element $n_4$. Now, we see whether the dotted lines containing blocks having elements from $S_2$ or $S_3$ passes through block having elements from $S_1$ or not. Without loss of generality, let us say that dotted line which contains block having element $n_5$ passes through block having element $n_1$. In this case, we send the block having the element $n_5$ to table $T_1$, and the rest of the empty blocks lying on the dotted line which contains this block to table $T_0$. We send the block having the element $n_4$ to table $T_0$, and rest of the block lying on the dotted line which contains this block to table $T_1$. Now we see the positions of the blocks having an element from  $S_1$. Let us first consider the case where a block having an element from $S_1$ have a conflicting bit common with a block lying on the dotted line which contains blocks having elements from $S_2$. Without loss of generality let us say block having the element $n_2$ have a conflicting bit common with block lying on the dotted line which contains blocks having elements from $S_2$. In this case, we send the block having the element $n_2$ to table $T_0$, and all other block lying on the dotted line containing this block to table $T_1$. Rest all the empty blocks are sent to table $T_0$. On the other hand,  if the blocks having elements from $S_1$ do not have a conflicting bit common with block lying on the dotted line which contains blocks having elements from $S_2$, then we send the block having elements from $S_1$ to table $T_1$, and rest all the empty blocks to table $T_0$. If none of the dotted lines which contains blocks having elements from $S_2$ or $S_3$ passes through block having elements from $S_1$, then we send the blocks having elements $n_4$ and $n_5$ to table $T_0$, and all the blocks lying on the dotted lines containing these blocks to table $T_1$. Further, we send the blocks having elements from $S_1$ to table $T_1$, and rest all the empty blocks to table $T_0$.

If the block having an element from $S_3$ do not have a conflicting bit common with block having an element from $S_2$, then we send all the block having elements to table $T_1$, and all the empty block to table $T_0$.\\
\\
\textbf{Case 6.2}
Now we consider the case where one of the sets having elements lie on a dotted line and other set having element lie on the different dotted lines. Without loss of generality consider the case where blocks having elements belonging to  $S_1$ lies on a dotted line and blocks having elements belonging to $S_2$ lies on the different dotted lines.\\
\\			
\textbf{Case 6.2.1}
All the blocks having elements from $S_2$ and $S_3$ have a  conflicting bit common with the blocks having elements from $S_1$. In this case, we send the block having an element from $S_2$ and $S_3$ to table $T_0$, and rest of the empty blocks lying on the dotted lines containing these blocks to table $T_1$. Also, we send the blocks having elements from $S_1$ to table $T_1$ and the rest of the empty blocks lying on the dotted line which contains these blocks to table $T_0$. We send the rest of the empty blocks to table $T_0$.\\ 
\\		
\textbf{Case 6.2.2}
Two blocks having elements from $S_2$ and $S_3$ have a  conflicting bit common with the block having an element from $S_1$. Now, here we can have two cases, either those two blocks have elements belonging to $S_2$, or we can have one element belonging to $S_2$, and other to $S_3$.

Let us first consider the case where two blocks having elements from $S_2$ have a  conflicting bit common with blocks having an element from $S_1$. Without loss of generality, let us say that block having element $n_3$ have a  conflicting bit common with the block having the element $n_1$ and the block having the element $n_4$ have a  conflicting bit common with the block having $n_2$. In this case, we send the block having elements $n_3$ and $n_4$ to table $T_0$ and rest of the empty block lying on the dotted lines containing these blocks to table $T_1$. Now we see the position of the block having the element $n_5$. Here we can have two cases either the dotted line which contains the block having the element $n_5$ passes through one of the block having an element from $S_1$ or it does not. Let us first consider the case where it passes through one of the blocks having elements from $S_1$. Without loss of generality let us say that the dotted line which contains block having the element $n_5$ passes through the block having the element $n_1$. In this case, we send the block having the element $n_5$ to table $T_0$, and rest of the block lying on the dotted line which contains this block to table $T_1$. Also, we send the block having the element $n_1$ to table $T_0$, and rest of the block lying on the dotted line containing this block to table $T_1$. Rest all the empty blocks are sent to table $T_0$. On another hand, if the dotted line which contains the block having the element $n_5$ do not pass through blocks having elements from $S_1$, then we send the block having the element $n_5$ to table $T_0$, and rest of the empty blocks on the dotted line containing this block to table $T_1$. Also, we send the blocks having elements from $S_1$ to table $T_1$, and rest of the empty blocks to table $T_0$. If the block having elements $n_3$ and $n_4$ conflicts with the same block having elements from $S_1$, say $n_1$, then assignment made above will work in this case.

Now we consider the case where one of the blocks having an element from $S_2$ and the block having an element from $S_3$ have a  conflicting bit common with the block(blocks) having an element from $S_1$. Without loss of generality say blocks having the element $n_3$ and $n_5$ have a  conflicting bit common with the block having elements from $S_1$. Now here we can have two cases either blocks having elements $n_3$ and $n_5$ have a  conflicting bit common with the same block having an element from $S_1$ or it have a  conflicting bit common with different blocks having elements from $S_1$. Let us first consider the case where blocks having element $n_3$ and $n_5$ have a  conflicting bit common with the same block having an element from $S_1$ say $n_1$. In this case, we send the blocks having elements $n_3, n_4$ and $n_5$ to table $T_0$, and rest of the blocks lying on the dotted lines containing these blocks to table $T_1$. Now we see whether the dotted line which contains block having element $n_4$ passes through block having element from $S_1$ or not. Without loss of generality let us say that dotted line which contains block having element $n_4$ passes through block having element $n_2$. In this case, we send the block having the element $n_2$ to table $T_0$, and rest of the block which lies on the dotted line containing this block to table $T_1$. Rest all the empty blocks are sent to  table $T_0$. If the dotted line which contain block having element $n_4$ do not pass through block having element from $S_1$, then we send the blocks having elements from $S_1$ to table $T_1$, rest all the empty blocks to table $T_0$. Now we consider the case where blocks having elements $n_3$ and $n_5$ have a  conflicting bit common with different blocks having elements from $S_1$. Without loss of generality let us say that block having element $n_3$ have a  conflicting bit common with the block having the element $n_1$ and the block having the element $n_5$ have a  conflicting bit common with the block having the element $n_2$. In this case also assignment made above will work.\\   
\\
\textbf{Case 6.2.3}
Only one block having an element from $S_2$ or $S_3$ have a  conflicting bit common with the block having an element from $S_1$ in table $T_1$. Now here we can have two cases either the block having an element from $S_2$ have a  conflicting bit common or the block having an element from $S_3$ have a  conflicting bit common with the block having an element from $S_1$.\\
\\ 
\textbf{Case 6.2.3.1} The block having an element from $S_2$ have a  conflicting bit common with a block having an element from $S_1$ in table $T_1$. Without loss of generality let us say that the block having the element $n_3$ conflicts  with the block having the element $n_1$. Now we see the position of the blocks having the element $n_4$ and $n_5$. Let us first consider the case where blocks having elements $n_4$ and $n_5$ have a  conflicting bit common.

If the blocks having elements $n_4$ and $n_5$ have a  conflicting bit common on the dotted line which contains blocks having elements from $S_1$, then we send the blocks having elements $n_3,n_4$ and $n_5$ to table $T_0$, and all the blocks lying on the dotted lines containing these blocks to table $T_1$. We send the block having element $n_1$ and $n_2$ to table $T_1$. Rest all the empty blocks are sent to table $T_0$.

If blocks having elements $n_4$ and $n_5$ have a  conflicting bit common outside the dotted line which contains blocks having elements from $S_1$, then we  see the positions of the dotted lines which contains block having elements $n_4$ and $n_5$. Without loss of generality let us say that the dotted line which contains block having element $n_5$ passes through block having element $n_2$, and the dotted line which contains block having element $n_4$ passes through block having element $n_1$. In this case, we send the blocks having elements $n_2,n_3$ and $n_5$ to table $T_0$, and all the blocks lying on the dotted lines containing these blocks to table $T_1$. Further, we send the block having element $n_4$  to table $T_1$. Rest all the empty blocks are sent to table $T_0$. Now we consider the case where only one dotted line which contains block having element $n_4$  or $n_5$ passes through block having element from $S_1$. Without loss of generality let us say that block having element $n_5$ passes through block having element $n_2$. In this case, we send the blocks having elements $n_2, n_3, n_4$ and $n_5$ to table $T_0$, and rest all the blocks lying on the dotted lines containing these block to table $T_1$. Further, we send rest all the empty blocks to table $T_0$. If none of the dotted lines, which contains blocks having element $n_4$ or $n_5$ passes through block having elements from $S_1$, then we send the blocks having elements  $n_3, n_4$ and $n_5$ to table $T_0$, and all the empty blocks lying on the dotted lines containing these blocks to table $T_1$. Further, we send the blocks having elements from $S_1$ to table $T_1$, and rest all the empty blocks are sent to table $T_0$. 

Now we are left with the case where blocks having elements $n_4$ and $n_5$ do not have a  conflicting bit common. This can have several sub-cases. Let us first consider the case where both the blocks having elements $n_4$ and $n_5$ passes through the dotted line which contains blocks having elements from $S_1$. In this case, we send the blocks having elements $n_3,n_4$ and $n_5$ to table $T_0$, and all the blocks lying on the dotted lines containing these blocks to table $T_1$. We send the blocks having elements $n_1$ and $n_2$ to table $T_1$. Rest all the empty blocks are sent to table $T_1$.

Now let us consider the case where $n_4$ and $n_5$ do not have a conflicting bit common with block lying on the dotted line which contains block having elements from $S_1$. In this case, we send the blocks having elements $n_3,n_4$ and $n_5$ to table $T_1$, and all the blocks lying on the dotted lines containing these blocks to table $T_0$. Block having the element $n_1$ is sent to table $T_0$, and rest all the blocks lying on the dotted line containing these blocks to table $T_1$. Rest all the empty blocks are sent to table $T_0$.

Now, we can also have a case where only one block having the element $n_4$ or $n_5$ have a  conflicting bit common with block lying on the dotted line which contains blocks having elements from $S_1$. Let us first consider the case where block having the element $n_4$ have a  conflicting bit common with the block lying on the dotted line which contains block having elements from $S_1$. In this case, we send the blocks having elements $n_3,n_4$ and $n_5$ to table $T_0$, and all other blocks lying on the dotted line containing these blocks to table $T_1$. Now, we see whether the block having elements from $S_1$ have a conflicting bit common with block lying on the dotted line which contains block having the element $n_5$. Without loss of generality let us say block having the element $n_1$ have a conflicting bit common with block lying on the dotted line which contains block having the element $n_5$. In this case, we send the block having the element $n_1$ to table $T_0$, and all other blocks lying on the dotted line which contains this block to table $T_1$. Rest all the empty blocks are sent to table $T_1$. On another hand, if the dotted line which contains block having the element $n_5$ do not pass through block having element from $S_1$, then we send the block having element from $S_1$ to table $T_1$, and all the blocks lying on the dotted line containing this block to table $T_0$. Rest all the empty blocks are sent to table $T_0$. Now let us consider the case where block having the element $n_5$ have a conflicting bit common with block lying on the dotted line which contains block having elements from $S_1$. In this case, we send the blocks having elements $n_3,n_4$ and $n_5$ to table $T_0$ and all other blocks lying on the dotted line containing these blocks to table $T_1$. Now we see whether the dotted line which contains block having element $n_4$ passes through block having element $n_1$ or $n_2$. Without loss of generality let us say that dotted line which contains block having element $n_4$ passes through block having element $n_2$. In this case, we send the block having element $n_2$ to table $T_0$, rest all the blocks lying on the dotted line which contains this block to table $T_1$. Rest all the empty blocks are sent to table $T_0$. If the dotted line which contains block having element $n_4$ do not pass through block having element from $S_1$, then we send the blocks having elements from $S_1$ to table $T_1$, and rest all the empty blocks to table $T_0$.

Now let us consider the case where none of the blocks having elements $n_4$ or $n_5$ conflict with the dotted line which contains block having elements from $S_1$. In this case, we send the blocks having elements $n_3, n_4$ and $n_5$ to table $T_0$, and all the blocks lying on the dotted lines containing these blocks to table $T_1$. Further, blocks having elements from $S_1$ are sent to table $T_1$, and rest all the empty blocks are sent to table $T_0$. \\
\\			
\textbf{Case 6.2.3.2}
The block having an element from $S_3$ have a conflicting bit common with a block having an element from $S_1$ in table $T_1$. Without loss of generality let us say that block having element $n_5$ have a conflicting bit common with the block having the element $n_1$. Now we see the position of the blocks having elements from $S_2$.

Both the blocks having an element from $S_2$ have a conflicting bit common with block lying on the dotted line which contains block having an element from $S_1$ in table $T_1$. In this case, we send the blocks having the elements $n_3,n_4$ and $n_5$ to table $T_0$ and all the blocks lying on the dotted lines containing these blocks to table $T_1$. Blocks having elements from $S_1$ are sent to table $T_1$ and rest all the empty blocks are sent to table $T_0$.

Both the blocks having an element from $S_2$ do not have a conflicting bit common with block lying on the dotted line which contains blocks having an element from $S_1$. Now we see whether the blocks having elements from $S_2$ conflicts with block having element $n_5$. Let us first consider the case where none of the blocks having elements from $S_2$ conflicts with block having element $n_5$. In this case, we send the blocks having elements $n_3,n_4$ and $n_5$ to table $T_1$, and all the empty blocks lying on the dotted lines containing these blocks to table $T_0$. Further, we send the block having the element $n_1$ to table $T_0$, and all the blocks lying on the dotted line containing this block to table $T_1$. Rest all the empty blocks are sent to table $T_0$. Now let us consider the case where only one block having element from $S_2$ conflicts with block having element $n_5$. Without loss of generality let us say that block having element $n_4$ conflicts with block having element $n_5$. In this case, we send the blocks having elements $n_3, n_4$ and $n_5$ to table $T_0$, and all the blocks lying on the dotted lines containing these blocks to table $T_1$.Now we see whether the dotted line which contains block having element $n_3$ passes through block having element $n_1$ or $n_2$. Without loss of generality let us say that the dotted line which contains block having element $n_3$ passes through block having element $n_2$. In this case, we send the block having element $n_2$ to table $T_0$, all the blocks lying on the dotted line containing this block to table $T_1$. Rest all the empty blocks are sent to table $T_0$. If the dotted line which contains block having element $n_3$ do not pass through block having element from $S_1$, then we send the block having element from $S_1$ to table $T_1$, rest all the empty blocks to table $T_0$. Now we consider the case where both the blocks having elements from $S_2$ conflicts with block having element $n_5$. In this case, we send the blocks having elements $n_3, n_4$ and $n_5$ to table $T_0$, all the blocks lying on the dotted lines containing these blocks to table $T_1$. Further, we send the block having element $n_2$ to table $T_0$, all the blocks lying on the dotted line containing this block to table $T_1$. Rest all the empty blocks are sent to table $T_0$.
Only one block having the element from $S_2$ have a conflicting bit common with block lying on the dotted line which contains blocks having elements from $S_1$ in table $T_1$. Without loss of generality let us say that block having element $n_3$ have a conflicting bit common with block lying on the dotted line which contains block having an element from $S_1$. Now we see whether the block having element $n_4$ conflicts with block having element $n_5$. Let us first consider the case where blocks having element $n_4$ and $n_5$ conflicts. In this case, we send the blocks having elements $n_1, n_2, n_3$ and $n_4$ to table $T_1$, and all the empty blocks lying on the dotted lines which contain these blocks to table $T_0$. We send the blocks having the element $n_5$ to table $T_0$, and rest all the blocks lying on the dotted lines which contain these blocks to table $T_1$. Rest all the empty blocks are sent to table $T_0$. If the blocks having elements $n_4$ and $n_5$ do not conflict, then we send the blocks having elements $n_3, n_4$ and $n_5$ to table $T_0$, and all the empty blocks lying on the dotted line containing this block to table $T_1$. Now we see whether a block having element from $S_1$ conflicts with empty block lying on the dotted line which contains block having element $n_4$. Without loss of generality, let us say that block having element $n_2$ conflicts with empty block lying on the dotted line which contains block having element $n_4$. In this case, we send the block having element $n_2$ to table $T_0$, and rest all the blocks lying on the dotted line containing this block to table $T_1$. Rest all the empty blocks are sent to table $T_0$. If none of the block having element from $S_1$ conflicts with empty block lying on the dotted line which contains block having element $n_4$, then we send the blocks having elements from $S_1$ to table $T_1$, and rest all the empty blocks to table $T_0$. \\
\\
\textbf{Case 6.2.4}
None of the blocks having an element from $S_2$ or $S_3$ have a conflicting bit common with the block having elements from $S_1$ in table $T_1$. In this case, we can have either the block having the element $n_5$ have a conflicting bit common with the block having elements from $S_2$ or it does not. 

Let us first consider the case where the block having the element $n_5$ have a conflicting bit common with one of the blocks having the element from $S_2$. Without loss of generality let us say that the block having the element $n_5$ have a conflicting bit common with the block having the element $n_3$. Now we see whether the block having the element $n_4$ have conflicting bit common with a block lying on the dotted line which contains block having elements from $S_1$ or not. Let us first consider the case where block having the element $n_4$ have conflicting bit common with block lying on the dotted line which contains block having an element from $S_1$. In this case, we send the block having the element $n_3$ to table $T_1$, and all the blocks lying on the dotted line containing this block to table $T_0$. We send the block having the element $n_5$ to table $T_0$, and all the blocks lying on the dotted line which contains this block to table $T_1$. Now we see whether any of the block having elements from $S_1$ have a conflicting bit common with block lying on the dotted line which contains element $n_5$ or not. Without loss of generality let us say that the block having the element $n_1$ have conflicting bit common with block lying on the dotted line which contains block having the element $n_5$. In this case, we send the blocks having the elements $n_1$ and $n_4$ to table $T_0$, and all other blocks lying on the dotted lines which contain these blocks to table $T_1$. Rest all the empty blocks are sent to table $T_0$. If none of the blocks having elements from $S_1$ have conflicting bit common with block lying on the dotted line which contains block having the element $n_5$, then we send the blocks having elements $n_1,n_2$ and $n_4$ to table $T_1$, and rest all the empty blocks to table $T_0$. Now let us consider the case where block having the element $n_4$ do not have conflicting bit common with block lying on the dotted line which contains block having elements from $S_1$. In this case, we see whether a block having element $n_5$ have conflicting bit common with a block having element $n_4$. Let us first consider the case where a block having element $n_5$ do not have conflicting bit common with a block having element $n_4$. In this case, we send the blocks having elements $n_4$ and $n_5$ to table $T_1$, and all the blocks lying on the dotted lines containing these blocks to table $T_0$. Further, we send the block having element $n_3$ to table $T_0$, and all the empty blocks lying on the dotted line containing this block to table $T_1$. Now, we see if either of the blocks having element $n_1$ or $n_2$ have conflicting bit common with block lying on the dotted line which contains block having element $n_3$. Without loss of generality, let us say block having element $n_1$ have conflicting bit common with block lying on the dotted line which contains block having element $n_3$. In this case, we send the block having element $n_1$ to table $T_0$, and all the blocks lying on the dotted line containing this block to table $T_1$. Rest all the empty blocks are sent to table $T_0$. On the contrary, if neither of the block having element $n_1$ or $n_2$ has conflicting bit common with block lying on the dotted line which contains block having element $n_3$, then we send the block having an element from $S_1$ to table $T_1$, and rest all the empty to table $T_0$. We are now left with the case where block having element $n_4$ have conflicting bit common with a block having element $n_5$. In this case, we send the block having $n_3$ and $n_4$ to table $T_1$, and all the blocks lying on the dotted line containing this block to table $T_0$. Further, we send the block having element $n_5$ to table $T_0$, and all the empty blocks lying on the dotted line containing this block to table $T_1$. Now we see whether any of the block having elements from $S_1$ have a conflicting bit common with block lying on the dotted line which contains element $n_5$ or not. Without loss of generality let us say that the block having the element $n_1$ have conflicting bit common with block lying on the dotted line which contains block having the element $n_5$. In this case, we send the block having element $n_1$ to table $T_0$, and all the blocks lying on the dotted line containing this block to table $T_1$. Rest all the empty blocks are sent to table $T_0$. If none of the blocks having elements from $S_1$ have conflicting bit common with block lying on the dotted line which contains block having the element $n_5$, then we send the blocks having elements from $S_1$ to table $T_1$ and rest all the empty blocks to table $T_0$.

Now we are left with the case where block having the element $n_5$ do not have a conflicting bit common with the block having an element from $S_2$. In this case, we send the blocks having elements to table $T_1$ and all the empty blocks to table $T_0$.\\
\\		
\textbf{Case 6.3}
All the elements belonging $S_1,S_2$ and $S_3$ lies on the different dotted line. In this case, we send the blocks having elements to table $T_0$, and all the empty blocks to table $T_1$.\\
\\
\textbf{Case 7}
Two blocks having elements belonging to $S_1=\{n_1,n_2\}$ lies in a same superblock and the element $n_3,n_4$ and $n_5$ to the different superblocks. If the blocks having elements from $S_1$ lies on the different dotted line, then assignment made in Case 5.3 can be used. So let us consider the case where blocks having the element from $S_1$ lies on the same dotted line.\\
\\
\textbf{Case 7.1}
Only one block having element from $n_3,n_4$ and $n_5$ have conflicting bit common with the block having element from $S_1$. Without loss of generality let us say block having the element $n_3$ have conflicting bit common with the block having the element $n_1$.\\
\\		
\textbf{Case 7.1.1}
Blocks having elements  $n_4$ and $n_5$ have conflicting bit common.\\ 
\\			
\textbf{Case 7.1.1.1}
Blocks having elements $n_4$ and $n_5$ have conflicting bit common with block lying on the dotted line which contains block having an element from $S_1$. In this case, we send the block having elements $n_3,n_4$ and $n_5$ to table $T_0$, and rest of the empty blocks lying on the dotted lines containing these blocks to table $T_1$. Blocks having the elements $n_1$ and $n_2$ are sent to table $T_1$ and rest all the empty blocks are sent to table $T_0$.\\  
\\
\textbf{Case 7.1.1.2}
Blocks having elements $n_4$ and $n_5$  does not have conflicting bit common with block lying on the dotted line, which contains block having an element from $S_1$. Now here we can have several cases depending upon whether the dotted lines which contain blocks having elements $n_4$ and $n_5$ passes through blocks having elements from $S_1$ or not. 

Without loss of generality let us say that dotted line which contains block having the element $n_4$ passes through the block having the element $n_1$, and the dotted line which contain block having the element $n_5$ passes through the block having the element $n_2$. Now in this case we send the blocks having elements $n_4$ and $n_1$ to table $T_1$. Also, we send the blocks having elements $n_3,n_5$ and $n_2$ to table $T_0$, and rest of the blocks lying on the dotted lines containing these blocks to table $T_1$. We send rest of the empty blocks to table $T_0$.

Now consider a case where only one dotted line which contains a block having element $n_4$ or $n_5$ passes through the block having an element from $S_1$. Without loss of generality let us say that block having the element $n_4$ passes through the block having an element from $S_1$. Let us first consider the case where the dotted line which contains block having the element  $n_4$ passes through the block having the element $n_2$. In this case, we send the blocks having elements $n_2, n_3, n_4$ and $n_5$ to table $T_0$, and rest of the blocks lying on the dotted lines containing these blocks to table $T_1$. Rest all the empty blocks are sent to table $T_0$. Now we can also have a case where block having the element $n_4$ passes through the block having the element $n_1$. In this case, we send the block having the elements $n_1,n_3,n_4$ and $n_5$ to table $T_0$, and rest of the blocks lying on the dotted lines containing these blocks to table $T_1$. Rest all the empty blocks are sent to table $T_0$. If none of the dotted lines which contains blocks having elements $n_4$ and $n_5$ passes through blocks having elements from $S_1$ then we send the blocks having elements $n_3,n_4$ and $n_5$ to table $T_0$, and all the blocks lying on the dotted lines containing these blocks to table $T_1$. Also, we send the blocks having elements from $S_1$ to table $T_1$. Rest all the empty blocks are sent to table $T_0$. \\      
\\
\textbf{Case 7.1.2}
Blocks having the elements $n_4$ and $n_5$ do not have conflicting bit common in table $T_1$ .\\
\\ 
\textbf{Case 7.1.2.1}
Both the blocks having element $n_4$ and $n_5$ have conflicting bit common with block lying on the dotted line which contains block having an element from $S_1$. In this case, we send the blocks having elements $n_3,n_4$ and $n_5$ to table $T_0$, and rest of the empty blocks lying on the dotted line containing these blocks to table $T_1$. Also, we send the blocks having elements $n_1$ and $n_2$ to table $T_1$, and rest of the empty blocks to table $T_0$.\\ 
\\
\textbf{Case 7.1.2.2}
One of the blocks having an element  $n_4$ or $n_5$ have conflicting bit common with block lying on the dotted line which contains blocks having elements from  $S_1$ and other do not have conflicting bit common. Without loss of generality let us say block having the element $n_4$ have conflicting bit common with block lying on the dotted line which contains blocks having elements from  $S_1$, and block having the element $n_5$ lies outside it. In this case, we send the blocks having elements $n_3,n_4$ and $n_5$ to table $T_0$, and all the blocks lying on the dotted lines containing these blocks to  table $T_1$. If any block having an element from $S_1$ have conflicting bit common with block lying on the dotted line which contains block having elements $n_5$, then we send that block to table $T_0$ all other blocks lying on the dotted line containing this block to table $T_1$. Rest all the empty blocks are sent to table $T_0$. On another hand, if none of the blocks having elements from $S_1$ have conflicting bit common with block lying on the dotted line which contains block having elements $n_5$ then we send the block having elements from $S_1$ to table $T_1$, and rest all the empty blocks to table $T_0$.  \\
\\ 
\textbf{Case 7.1.2.3}
None of the blocks having elements $n_4$ or $n_5$ have conflicting bit common with block lying on the dotted line which contains blocks having elements from $S_1$. Now we see position of the block having element $n_4$ and $n_5$. Let us first consider the case where blocks having elements $n_4$ and $n_5$ do not conflicts with block having element $n_3$. In this case, we send the blocks having elements $n_2,n_3,n_4$ and $n_5$ to table $T_1$. Further, we send the block having the element $n_1$ to table $T_0$ and the rest of the empty blocks lying on the dotted line which contains this block to table $T_1$. Rest all the empty blocks are sent to table $T_0$. If both the blocks having elements $n_4$ and $n_5$ conflicts with block having element $n_3$. In this case, we send the blocks having elements $n_3, n_4$ and $n_5$ to table $T_0$, and all the blocks lying on the dotted lines which contain these blocks to table $T_1$. Further, we send the block having element $n_2$ to table $T_0$, and all the blocks lying on the dotted line which contain this block to table $T_1$. Rest all the empty blocks are sent to table $T_0$. Now we are left with the case, where only one block having element $n_4$ or $n_5$ conflicts with block having element $n_3$. Without loss of generality let us say that block having element $n_4$ conflicts with block having element $n_3$. In this case, we send the block having element $n_3$ and $n_5$ to table $T_0$, and rest all the empty blocks lying on the dotted lines containing theses blocks to table $T_1$. We send the block having element $n_4$ to table $T_1$, and rest all the empty blocks lying on the dotted line containing this block to table $T_0$. Now we see whether the dotted line which contains block having element $n_5$ passes through block having element $n_1$ or $n_2$. Without loss of generality let us say that dotted line which contains block having element $n_5$ passes through block having element $n_1$. In this case, we send the block having element $n_1$ to table $T_0$, and rest all the blocks lying on the dotted line containing this block to table $T_1$. Rest all the empty blocks are sent to table $T_0$. On the other hand, if the dotted line which contains block having element $n_5$ do not pass through block having element $n_1$ or $n_2$, then we send the blocks having elements from $S_1$ to table $T_1$, and rest all the empty blocks to table $T_0$. \\
\\
\textbf{Case 7.2}
Two blocks having elements from $n_3,n_4$ or $n_5$ have conflicting bit common with the block having elements from $S_1$. Without loss of generality let us say that blocks having elements $n_3$ and $n_4$ have conflicting bit common with the block having elements from $S_1$. Now, here we can have two cases, either the blocks having elements $n_3$ and $n_4$ have conflicting bit common with the same block having an element from $S_1$ or it has conflicting bit common with the different blocks having elements from $S_1$.

Let us first consider the case where blocks having the element $n_3$ and $n_4$ have conflicting bit common with the different blocks having elements from $S_1$. Without loss of generality, let us say that block having element $n_3$ have conflicting bit common with the block having the element $n_1$ and the block having the element $n_4$ have conflicting bit common with the block having the element $n_2$. Now let us consider the intersection of the dotted line which contains block having the element $n_5$ from the blocks having elements from $S_1$ . Without loss of generality, let us say that the dotted line which contains block having element $n_5$ passes through the block having element $n_1$, in this case, we send the blocks having elements $n_1,n_3,n_4$ and $n_5$ to table $T_0$ and all other blocks lying on the dotted lines which contains these blocks to table $T_1$. Rest all the empty blocks are sent to table $T_1$. On the another hand, if the block having element $n_5$ do not pass through the blocks having elements from $S_1$, then we send the blocks having $n_3,n_4$ and $n_5$ to table $T_0$, and all other blocks lying on the dotted lines which contains these blocks to table $T_1$. Further, we send the blocks having elements from $S_1$ to table $T_1$ and rest all the empty blocks to table $T_0$.

Now we are left with the case where blocks having the element $n_3$ and $n_4$ have conflicting bit common with only one block having an element from $S_1$. Without loss of generality let  us say that blocks having elements $n_3$ and $n_4$ have conflicting bit common with the block having the element $n_1$. In this case, we see the position of the block having the element $n_5$. If the block having the element $n_5$ have conflicting bit common with empty block lying on the dotted line having elements from $S_1$, then we send the block having the element $n_3,n_4$ and $n_5$ to table $T_0$, and all the empty blocks lying on the dotted lines containing these blocks to table $T_1$. We send the blocks having elements from $S_1$ to table $T_1$. Rest all the empty blocks are sent to table $T_0$.
Further, if the block having the element $n_5$ do not have conflicting bit common with empty block lying on the dotted line which contains blocks having element from $S_1$ then we send the block having element $n_3,n_4$ and $n_5$ to table $T_0$, and all the empty blocks lying on the dotted lines containing these blocks to table $T_1$. Now we see whether the dotted line which contains block having the element $n_5$ passes through a block having an element from $S_1$ or not. Let us first consider the case where the dotted line which contains block having the element $n_5$ passes through a block having an element from $S_1$. Without loss of generality, let us say that the dotted line which contains block having the element $n_5$ passes through the block having the element $n_1$. In this case, we send the block having the element $n_1$ to table $T_0$, and all other blocks lying on the dotted line containing this block to table $T_1$. Rest all the empty blocks are sent to table $T_0$. On another hand, if the dotted line which contains block having the element $n_5$ do not pass through the block having an element from $S_1$, then we send the blocks having the elements from $S_1$ to table $T_1$. Rest all the empty blocks are sent to table $T_0$.\\        
\\
\textbf{Case 7.3} All the three blocks having elements $n_3,n_4$ and $n_5$ have conflicting bit common with the blocks having elements from $S_1$. In this case, we send the blocks having elements $n_3,n_4$ and $n_5$ to table $T_0$, and all the blocks lying on the dotted lines containing these blocks to table $T_1$. Blocks having elements from $S_1$ are sent to table $T_1$. Rest all the empty blocks are sent to table $T_0$.\\
\\ 
\textbf{Case 7.4}
None of the blocks having elements from $n_3,n_4$ and $n_5$ have conflicting bit common with the blocks having elements from $S_1$.

Let us first consider the case where all the blocks having elements $n_3,n_4$ and $n_5$ have conflicting bit common. If all of them have conflicting bit common on the dotted line which contains blocks having elements from $S_1$ then we send the blocks having elements $n_3,n_4$ and $n_5$ to table $T_0$, and rest of the empty blocks lying on the dotted lines containing these blocks to table $T_1$. Also we send the blocks having elements $n_1$ and $n_2$ to table $T_1$. Rest all the empty blocks are sent to table $T_0$.

If the blocks having elements $n_3,n_4$ and $n_5$ have conflicting bit common outside the dotted line which contains blocks having elements from $S_1$, then we see the intersection of the dotted lines containing these blocks with the blocks having elements from $S_1$. Now since all the blocks having elements from $n_3,n_4$ and $n_5$ have conflicting bit common at most two dotted lines having blocks containing these elements can have conflicting bit common with the blocks having elements from $S_1$. Without loss of generality let us say that the dotted line which contains block having the element $n_3$ passes through the block having the element $n_1$ and the dotted line which contains block having the element $n_4$ passes through the block having the element $n_2$. In this case, we send the blocks having element $n_1$ and $n_3$ to table $T_1$. Also, we send the blocks having elements $n_2,n_4$ and $n_5$ to table $T_0$, and rest of the empty blocks lying on the dotted line containing these blocks to table $T_1$. Rest all the empty blocks are sent to table $T_0$.

Without loss of generality let us now consider a case where only one dotted line having element say $n_3$ passes through the block having element say $n_1$, then we send the blocks having elements $n_1,n_2$ and $n_3$ to table $T_1$. Also, we send the blocks having elements $n_4$ and $n_5$ to table $T_0$, and rest of the empty blocks lying on the dotted lines containing these blocks to table $T_1$.Rest all the empty blocks are sent to table $T_0$.

If none of the dotted lines containing blocks having elements $n_3,n_4$ and $n_5$ passes through the blocks having element from $S_1$ then we send the blocks having elements $n_3,n_4$ and $n_5$ to table $T_0$, and rest of the empty blocks lying on the dotted line containing these blocks to table $T_1$. Further, we send the blocks having element $n_1$ and $n_2$ to table $T_1$, and rest of the empty blocks to table $T_0$.

Now let us consider the case where only two blocks having elements from $n_3,n_4$ and $n_5$ have conflicting bit common. Without loss of generality let us say the blocks having elements $n_3$ and $n_4$ have conflicting bit common. Now we see the intersection of dotted lines containing the blocks $n_3$ and $n_4$. Let us first consider the case where both the dotted lines containing blocks $n_3$ and $n_4$ passes through the blocks having elements from $S_1$. Without loss of generality let us say the dotted line which contains block having the element $n_3$ passes through the block having the element $n_1$ and the dotted line which contains block having the element $n_4$ passes through the block having the element $n_2$. Now, we see the position of the block having element $n_5$. Let us first consider the case where block having $n_5$ have conflicting bit common with the block lying on the dotted line which contains block having an element from $S_1$. Further, let us consider that the dotted line which contains block having element $n_5$ intersects the block having element $n_3$ and $n_4$. Now, it is interesting to note that if two nonempty and an empty blocks intersect, then we can either send both nonempty blocks to table $T_1$, or we can send a nonempty block and an empty to  table $T_1$. Let us first consider the case where we can send both the blocks having element $n_3$ and $n_4$ to table $T_1$. In this case, we send all the blocks having elements to table $T_1$, and all the empty blocks to  table $T_0$. Now, without loss of generality let us consider that we can send the blocks having element $n_3$ and the empty block lying on the dotted line which contains block having element $n_5$ to table $T_1$. In this case, we send the block having element $n_2,n_4$ and $n_5$ to table $T_0$, and all the blocks lying on the dotted line containing this block to table $T_1$. Further, we send the block having element $n_3$ to table $T_1$. Rest all the empty blocks are sent to table $T_0$. Let us now consider the case where dotted line which contains block having element $n_5$ do not pass through both the block having element $n_3$ and $n_4$. Now since dotted line which contains block having element $n_5$ do not intersect with both the blocks having elements $n_4$ and $n_5$ it can conflict with at most one block having element. Without loss of generality, let us say that empty block lying on the dotted line which contains block having element $n_5$ either conflict with a block having element $n_3$ or it does not. In this case, we send the blocks having elements $n_1,n_3$ and $n_5$ to table $T_0$, and all the blocks lying on the dotted lines containing these blocks to table $T_1$. Further, we send the blocks having element $n_2$ and $n_4$ to table $T_1$. Rest all the empty blocks are sent to table $T_0$. Let us now consider the case where block having element $n_5$ do not intersect with the dotted line which contains block having elements from $S_1$. Now let us consider the case where block having element $n_5$ do not conflict with block lying on the dotted line having elements from $S_1$. In this case, we send the block having element $n_5$ to table $T_1$ and all the empty blocks lying on the dotted line containing this block to table $T_0$. Now we see if the block having element $n_5$ have conflicting bit common either with block lying on the dotted line which contains block having element $n_3$ or $n_4$. Without loss of generality, let us say block having element $n_5$ have conflicting bit common with empty block lying on the dotted line which contains block having element $n_3$. In this case, we send the block having element $n_1$ and $n_3$ to table $T_1$. Further, we send the block having element $n_2$ and $n_4$ to table $T_0$, and all the blocks lying on the dotted lines containing these blocks to table $T_1$. Rest all the empty blocks are sent to table $T_0$. If the block having element $n_5$, do not have conflicting bit common with block lying on the dotted line which contains block having element $n_3$ or $n_4$ then previous assignment works. 

Let us consider the case where only one dotted line which has block having the element $n_3$ or $n_4$ passes through the block having elements from $S_1$. Without loss of generality let us say that dotted line which contains block having the element $n_3$ passes through the block having the element $n_1$. In this case, we see the position of the block having the element $n_5$. Let us first consider the case where block having the element $n_5$ have conflicting bit common with block lying on the dotted line which contains blocks having elements from $S_1$. In this case,  we send the block having element $n_1, n_3, n_4$ and $n_5$ to table $T_0$, and all the blocks lying on the dotted lines containing these blocks to table $T_1$. Further, we send rest all the empty blocks to table $T_0$. 

Now let us consider the case where the block having the element $n_5$ do not have conflicting bit common with block lying on the dotted line which contains blocks having elements from $S_1$. Now we see whether the block having element $n_5$ conflicts with the dotted line which contain block having element $n_3$ or $n_4$. If the block having element $n_5$ conflicts with block lying on the dotted line which contains block having element $n_3$, then we send the blocks having elements $n_2, n_4$ and $n_5$ to table $T_0$, rest all the blocks lying on the dotted lines containing these blocks to table $T_1$. Further, we send the block having element $n_3$ to table $T_1$, and rest all the empty blocks to table $T_0$. If the block having element $n_5$ conflicts with block lying on the dotted line which contains block having element $n_4$, then we send the blocks having elements $n_4$ and $n_5$ to table $T_1$, and rest all the blocks lying on the dotted lines containing these blocks to table $T_1$. Further, we send the blocks having elements $n_1$ and $n_3$ to table $T_0$, and all the blocks lying on the dotted line containing these blocks to table $T_1$. Rest all the empty blocks are sent to table $T_0$. If the block having element $n_5$ do not conflict with block lying on the dotted line which contains block having element $n_3$ or $n_4$, then we send the blocks having elements $n_1, n_2, n_3$ and $n_5$ to table $T_1$, and all the blocks lying on the dotted lines containing these blocks to table $T_0$. Further, we send the block having element $n_4$ to table $T_0$, and all the empty blocks lying on the dotted line containing this block to table $T_1$. Rest all the empty blocks are sent to table $T_0$.

If none of the dotted lines which contains block having elements $n_3$ or $n_4$ passes through block having element from $S_1$, then we send the blocks having elements $n_3,n_4$ and $n_5$ to table $T_0$, and all the empty blocks lying on the dotted lines containing these blocks to table $T_1$. Now either the block having an element from $S_1$ have conflicting bit common with block lying on the dotted line which contains block having the element $n_5$ or it does not. Let us first consider the case where block having elements from $S_1$ have conflicting bit common with block lying on the dotted line which contains block having the element $n_5$. Without loss of generality let us say that block having element $n_1$ have conflicting bit common with block lying on the dotted line which contains block having the element $n_5$. In this case, we send the block having the element $n_1$ to table $T_0$, and all the blocks lying on the dotted line containing this block to table $T_1$. Rest all the empty blocks are sent to table $T_0$. If none of the block having an element from $S_1$ has conflicting bit common with block lying on the dotted line which contains block having the element $n_5$, then we send the blocks having an element from $S_1$ to table $T_1$. Rest all the empty blocks are sent to table $T_0$. 

Now we are left with the case where none of the blocks having elements  $n_3,n_4$ and $n_5$ have conflicting bit common. In this case, we send the blocks having elements to table $T_1$, and all the empty blocks to table $T_0$.

\end{document}